\documentclass{article}
\usepackage{amsmath,amssymb,amsthm,mathtools,mathrsfs}
\usepackage{graphicx}
\usepackage{newpxtext}

\newtheorem{theorem}{Theorem}[section]
\newtheorem{lemma}{Lemma}[section]
\theoremstyle{remark}
\newtheorem{remark}{Remark}[section]
\title{Basket implied volatility skew and stickiness\thanks{This work arose from a joint research project involving Mizuho Bank, Ltd., Mizuho-DL Financial Technology Co., Ltd., Mizuho Securities Co., Ltd.,  and The University of Osaka. The authors are grateful to the participants for helpful comments and discussions.
The authors used OpenAI's ChatGPT to assist with language editing and the checking of calculations and mathematical arguments. All AI-generated suggestions were critically reviewed and, where appropriate, revised by the authors, who take full responsibility for the content of this paper.
}}
\author{Masaaki Fukasawa$^\dagger$, Jun Maeda$^\ddagger$, and Tatsuya Ogiwara$^\ddagger$\\
{$^\dagger$ \small The University of Osaka}\\
{$^\ddagger$ \small Mizuho Securities Co., Ltd.}
}
\date{}

\begin{document}

\maketitle

\begin{abstract}
We study the short-maturity implied volatility and the
skew stickiness ratio for baskets of assets with continuous,
possibly rough, stochastic volatility. The fluctuation of the
instantaneous basket variance has two sources: fluctuations of
the constituent variances and fluctuations of the 
basket weights. We derive a near-the-money implied volatility
expansion that separates these contributions. We then
specialize the result to volatility models given by general
functions of Gaussian Volterra factors and obtain an explicit
basket-skew coefficient in terms of the short-time kernel
asymptotics, the factor sensitivities, and the return-factor
correlations. A density expansion justifies differentiation of the near-the-money
expansion at the money. Finally, using a Malliavin
representation of the dynamics of total implied variance, we
prove that the short-maturity skew stickiness ratio converges
to the universal limit $H+3/2$ for Gaussian factor basket
models with $H\in(0,1/2]$.
\end{abstract}

\section{Introduction}
\label{sec:introduction}

Equity indices and other baskets are weighted sums of assets
whose individual prices and volatilities evolve jointly. Even
when the constituent assets follow relatively simple
continuous stochastic volatility models, the volatility of the
basket has a more complicated structure. It depends not only on
the constituent volatilities and their correlations, but also
on the stochastic portfolio weights generated by relative
movements of the constituent prices. Understanding how these
two mechanisms contribute to the short-maturity implied
volatility skew is important both theoretically and for the
modeling of index options.

For a single asset, the short-maturity behavior of the
at-the-money implied volatility skew is closely related to the regularity of the spot variance. In classical
diffusion models, the skew remains finite as maturity
tends to zero. By contrast, rough volatility models, in which
the covariance between returns and spot variance has a
fractional term structure, produce a power law
which is consistent with the steep short-dated skews observed
in equity option markets. A general probabilistic explanation
of this relation was given by Fukasawa~\cite{F21}. 

For baskets, an additional effect appears. Let
\[
    S_t=\sum_{i=1}^nS_t^i,
    \qquad
    \Pi_t^i=\frac{S_t^i}{S_t},
\]
where $S^i_t$, $i=1,\dots,n$, are the constituent asset prices
at time $t$.
The instantaneous basket variance is
\[
    V_t
    =
    \sum_{i,j=1}^n
    \Pi_t^i\Pi_t^j
    \sqrt{V_t^iV_t^j}\rho^{ij},
\]
where $V^i$ is the spot variance process of $S^i$ and
$\rho^{ij}$ is the correlation between the Brownian motions
driving $S^i$ and $S^j$.
Its short-time fluctuation has two distinct sources. The first
comes from fluctuations of the constituent spot variances
$V_t^i$. If these fluctuations occur at a rate $r(t)$ that is
regularly varying with index $H$, their contribution is of
order $r(t)$. The second
comes from fluctuations of the endogenous basket weights
$\Pi_t^i$. Since the weights are driven directly by asset
returns, their contribution is always of order $\sqrt t$.
The two effects have different asymptotic origins and, when
$H<1/2$, different orders of magnitude.

The first purpose of this paper is to derive a general
short-maturity basket implied volatility expansion formula that separates these two
contributions under minimal conditions. Under weak convergence and uniform-integrability
conditions on the normalized constituent variance
fluctuations, we prove that the basket implied volatility $\sigma(k,\theta)$ admits
\begin{align*}
    \sigma(\sqrt\theta z,\theta)
    &=
    \sqrt{\bar v(\theta)}
    +
    z\left\{
        \frac{\psi_A}{2H+3}r(\theta)
        +
        \frac{\psi_B}{4}\sqrt\theta
    \right\}
    +
    o\bigl(r(\theta)+\sqrt\theta\bigr),
\end{align*}
where $\theta$ is time-to-maturity. In the pure-power case,
$r(\theta)=\theta^H$.
Here $\bar v(\theta)$ is the averaged deterministic
short-maturity basket variance. The coefficient $\psi_A$
describes the leverage between the basket return and the
constituent volatility fluctuations, whereas $\psi_B$ is the
contribution generated by fluctuations of the basket weights.

When $H<1/2$ and $\psi_A\neq0$, the volatility-fluctuation term dominates and the basket skew is regularly
varying with index $H-1/2$. In the pure-power case
$r(\theta)=\theta^H$, its leading order is
$\theta^{H-1/2}$.
The portfolio-weight term is then of smaller
order. At the boundary $H=1/2$, however, both terms must be retained in general. This boundary case is
particularly important for local and classical stochastic
volatility models. Omitting the portfolio-weight contribution
would in general give an incorrect finite short-maturity skew.

Short-maturity basket skews have previously been studied in
several specific settings. For local volatility models, related
formulas follow from the large-deviation approach of Avellaneda
et al.~\cite{A1,A2} and Pirjol~\cite{Pirjol}; see also Bayer and
Laurence~\cite{BL}. A formal derivation based on Markovian
projection was given by Piterbarg~\cite{Piterbarg}. Particular
multivariate stochastic volatility models include the
bivariate SABR model considered by Forde and
Zhang~\cite{FZ} and the normal SABR setting studied by Hagan et
al.~\cite{Hagan}. For rough Bergomi-type basket models, Friz and
Wagenhofer~\cite{FW} obtained basket-skew formulas using
small-noise large deviations. Their asymptotic regime is
closely related to, but different from, the short-time
near-the-money regime considered here. Our approach instead
starts from a general weak limit of normalized return and
variance fluctuations and therefore applies beyond a
particular Markovian, semimartingale, or exponential-Gaussian
specification.

We next specialize the general formula to Gaussian factor
volatility models. The constituent variances are allowed to be
general functions of time and finitely many Gaussian Volterra
factors.
The class of models incorporates the multi-factor Bergomi model~\cite{Bergomi}, the rough Bergomi model~\cite{RoughBook},
and the quintic model~\cite{AbiJaber}.
If the factor kernels behave near zero as
\[
    K(t)\sim \kappa\frac{r(t)}{\sqrt t},
\]
then a Gaussian delta method gives the limiting constituent
variance fluctuations explicitly. The resulting basket-skew
coefficient depends only on the logarithmic sensitivities of
the volatility functions at time zero, the short-time kernel
coefficients $(H,\kappa)$, the price-factor correlations, and
the initial basket weights and variances. 

The dynamics of the implied volatility surface are important in
addition to its static short-maturity shape. A commonly used
measure of smile dynamics is the skew stickiness ratio (SSR),
introduced by Bergomi~\cite{Bergomi}. It compares the
instantaneous response of at-the-money implied volatility to a
movement of the underlying with the contemporaneous
at-the-money implied volatility skew. 

There are different ways volatility can move relative to spot. For example, {\it vol-by-moneyness} assumes the volatility smile remains fixed as a function of moneyness, while {\it vol-by-strike} assumes the smile remains fixed as a function of absolute strike. These dynamics differ across markets and can also shift across regimes for the same underlying.
Because these dynamics affect how option values and Greeks evolve as spot moves, traders need to understand, and ideally anticipate, the prevailing volatility regime in order to hedge appropriately. The SSR provides one indicator for identifying which regime the underlying is in.

There are two reasons to focus on basket SSR.
First, it helps us understand the volatility dynamics of a basket. Equity indices such as the S\&P 500 and Nikkei 225 can be viewed as baskets of stocks. Since they have liquid listed options markets, we can observe index smiles directly and infer their volatility dynamics from market data. We also want to confirm that the dynamics observed at the index level are consistent with what is implied by the listed options on the index constituents.

Second, basket SSR is useful for hedging derivatives on a bespoke basket, where the relevant volatility regime (vol-by-moneyness, vol-by-strike, or another dynamic) is not directly observable. Computing SSR for the basket provides a practical indicator of how the smile is likely to shift, and therefore how to hedge derivative structures written on that basket.

The second purpose of this paper is therefore to determine the
short-maturity SSR for basket options. For a single asset,
Fukasawa~\cite{F26} showed, under Bergomi-type volatility
models, that the SSR converges to the universal limit
\[
    H+\frac32.
\]
For a basket, it is not immediate that the same limit should
hold. As shown by our basket-skew expansion, both fluctuations
of the constituent volatilities and fluctuations of the
endogenous portfolio weights affect the short-maturity skew.
These two sources also contribute to the response of ATM
implied volatility to movements of the basket, and their
relative importance changes at the boundary $H=1/2$.

We extend the single-asset analysis to Gaussian factor basket
models and prove, under a natural nondegeneracy condition, that
the basket SSR has the same
universal limit
\[
    H+\frac32
\]
for every $H\in(0,1/2]$. In particular, the limit remains valid
when $H=1/2$, at which the constituent-volatility and
portfolio-weight effects must in general both be retained.

The remainder of the paper is organized as follows.
Section~\ref{sec:basket-skew} establishes the general
short-maturity basket implied-volatility expansion.
Section~\ref{sec:Gaussian-factor-models} specializes the result
to Gaussian factor volatility models. Section~\ref{sec:stickiness} studies the basket
SSR and proves its universal short-maturity limit.

\section{Basket implied volatility skew}
\label{sec:basket-skew}
We assume frictionless markets with zero interest and dividend rates for brevity.

Consider $n$ assets
\begin{equation*}
    \frac{\mathrm{d}S^i_t}{S^i_t}
    =\sqrt{V^i_t}\,\mathrm dB^i_t,
    \qquad S_0^i>0,\quad i=1,\dots,n,
\end{equation*}
defined on a filtered probability space
$(\Omega,\mathscr{F},\mathsf{Q},\{\mathscr{F}_t\})$ with
\[
    \mathscr{F}_0
    =\{A\in\mathscr{F} ; \mathsf{Q}(A) \in \{0,1\}\},
\]
where $(B^1,\dots,B^n)$ is a constantly correlated
$\{\mathscr{F}_t\}$-Brownian motion with
\begin{equation*}
    \langle B^i, B^j \rangle_t = \rho^{ij} t
\end{equation*}
and the spot variance processes $V^i=\{V^i_t\}$
are $\{\mathscr{F}_t\}$-adapted and continuous nonnegative with $V^i_0 > 0$.

We are interested in the implied volatility $\sigma(k,\theta)$  of a basket option  defined through
\begin{equation*}
    \mathsf{E}[(S_0e^k-S_\theta)_+] = S_0p\left(k,\sigma(k,\theta)\sqrt{\theta}\right),
\end{equation*}
where
\begin{equation*}
    S_t = \sum_{i=1}^n S^i_t
\end{equation*}
and $p$ is the Black-Scholes put price:
\begin{equation*}
    p(k,w) = e^k\Phi(-d_-(k,w)) - \Phi(-d_+(k,w)),\ \ 
    d_\pm(k,w) = - \frac{k}{w} \pm \frac{w}{2}.
\end{equation*}
Here and hereafter, $\mathsf{E}$ refers to the expectation with respect to the measure $\mathsf{Q}$.
Our first main result is as follows.
\begin{theorem}\label{thm1}
Assume that the stochastic exponentials $S^i$ are true
martingales, and
that there exist $t_0>0$ and a function
$r:(0,t_0)\to(0,\infty)$, regularly varying at zero with index
$H\geq0$ and satisfying $r(0+)=0$, such that
\begin{equation}
Y^i = \{Y^i_t\}_{t \in (0,t_0)}, \quad
  Y^i_t :=  \frac{1}{r(t)}\left(\frac{V^i_t}{\mathsf{E}[V^i_t]} - 1 \right)
\end{equation}
are uniformly integrable  for all $i = 1,\dots, n$. 
Moreover, we assume that the joint law of
\[
    (X_\theta^1,\dots,X_\theta^n,
     Y_\theta^1,\dots,Y_\theta^n)
\]
converges to a centered $2n$-dimensional Gaussian distribution
as $\theta\to0$, where
\begin{equation*}
    X_\theta^i
    :=
\frac1{\sqrt{V^i_0 \theta}}\left(\frac{S^i_\theta}{S^i_0} - 1 \right).
\end{equation*}
Let $\psi^{ij}$ denote
the covariance between the limits of $X^i_\theta$ and
$Y^j_\theta$.
If $v(0)>0$,
then
\begin{align}
    \sigma(\sqrt{\theta}z,\theta)
    &=
    \sqrt{\bar v(\theta)}
    +
    \left\{
        \frac{\psi_A}{2H+3}r(\theta)
        +
        \frac{\psi_B}{4}\sqrt{\theta}
    \right\}z
    \nonumber\\
    &\quad+
    o\bigl(r(\theta)+\sqrt{\theta}\bigr)
    \label{basket-IV-expansion}
\end{align}
as $\theta\to0$, uniformly in $z$ on compact sets, where
\begin{equation*}
\begin{split}
    &\bar{v}(\theta) = \frac{1}{\theta}\int_0^\theta v(t) \mathrm{d}t, \\& v(t) = \sum_{i,j=1}^n v^{ij}(t), 
    \\ &v^{ij}(t) = \Pi^i_0\Pi^j_0 \mathsf{E}[\sqrt{V^i_t V^j_t}]\rho^{ij}, \\
    &\Pi^i_t = \frac{S^i_t}{S_t},
\end{split}
\end{equation*}
and
\begin{equation*}
\begin{split}
      \psi_A & =   \sum_{i,j,k=1}^n\frac{v^{ij}(0)}{2v(0)^{3/2}}(\psi^{ki} + \psi^{kj})\Pi^k_0\sqrt{V^k_0}, \\
    \psi_B & = \sum_{i,j,k=1}^n \frac{v^{ij}(0) }{v(0)^{3/2}}
  \left(\frac{v^{ik}(0)}{\Pi^i_0} + \frac{v^{jk}(0)}{\Pi^j_0}\right) -2\sqrt{v(0)}.
\end{split}
\end{equation*}
\end{theorem}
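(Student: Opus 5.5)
The plan is to reduce the statement to the single-asset short-maturity expansion of Fukasawa~\cite{F21}, applied to the basket $S$ itself. By It\^o's formula,
\[
\frac{\mathrm dS_t}{S_t}=\sum_{i}\Pi^i_t\sqrt{V^i_t}\,\mathrm dB^i_t ,
\]
so $S$ is a continuous martingale (a sum of true martingales) with spot variance $V_t=\sum_{i,j}\Pi^i_t\Pi^j_t\sqrt{V^i_tV^j_t}\rho^{ij}$. The result of \cite{F21} I intend to use states the following. Suppose the normalized return $X_\theta=(S_\theta/S_0-1)/\sqrt{v(0)\theta}$ and a normalized variance fluctuation $(V_t-v(t))/(v(0)\tilde r(t))$ are uniformly integrable and jointly converge to a Gaussian with covariance $c$, where $\tilde r$ is regularly varying with index $\tilde H$. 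Then
\[
\sigma(\sqrt\theta z,\theta)=\sqrt{\bar v(\theta)}+\frac{c\sqrt{v(0)}}{2\tilde H+3}\,\tilde r(\theta)\,z+o(\tilde r(\theta)).
\]
Because two scales $r(t)$ and $\sqrt t$ coexist, I will not take a single $\tilde r$ blindly. Instead I decompose $V_t-v(t)$ into an $r$-part and a $\sqrt t$-part, and run the argument of \cite{F21} linearly on each. Its proof gives the skew as an integral $\int_0^\theta \mathsf E[X_\theta(V_t-v(t))]\,\mathrm dt$ weighted appropriately. The two contributions therefore add, producing $1/(2H+3)$ for the first part and $1/(2\cdot\tfrac12+3)=1/4$ for the second.

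\textbf{Step 1: first-order decomposition of $V_t$.} I write $\sqrt{V^i_t}=\sqrt{\mathsf E[V^i_t]}\,(1+\tfrac12 r(t)Y^i_t+\text{remainder})$, where the remainder is $o(r(t))$ in $L^1$ by uniform integrability. I also use $\Pi^i_t/\Pi^i_0-1=(S^i_t/S^i_0-1)-(S_t/S_0-1)+O(\text{quadratic})$, whose leading term is
\[
\sqrt t\Bigl(\sqrt{V^i_0}X^i_t-\sum_k\Pi^k_0\sqrt{V^k_0}X^k_t\Bigr).
\]
Substituting both into $V_t$ gives
\[
V_t-v(t)=r(t)\sum_{i,j}v^{ij}(t)\tfrac12(Y^i_t+Y^j_t)+\sqrt t\sum_{i,j}v^{ij}(t)\bigl(\Delta^i_t+\Delta^j_t\bigr)+o_{L^1}(r(t)+\sqrt t),
\]
where $\Delta^i_t$ is the bracket above. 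Here I use $\mathsf E[\sqrt{V^i_tV^j_t}]=\sqrt{\mathsf E V^i_t\,\mathsf E V^j_t}\,(1+o(1))$. This holds because $Y$ is UI and $r(0+)=0$, so $V^i_t/\mathsf E[V^i_t]\to1$ in $L^1$.

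\textbf{Step 2: covariances.} The basket return is $X_\theta=v(0)^{-1/2}\sum_k\Pi^k_0\sqrt{V^k_0}X^k_\theta+o(1)$. Covariances follow from the assumed joint Gaussian limit.
\begin{itemize}
\item Against the $Y$-term, the covariance is $\sum_{i,j,k}v^{ij}(0)\tfrac12(\psi^{ki}+\psi^{kj})\Pi^k_0\sqrt{V^k_0}/v(0)^{1/2}$. Normalizing by $v(0)$ and multiplying by $\sqrt{v(0)}$ gives $\psi_A$.
\item Against the $\Delta$-term, I use $\mathrm{Cov}(\sqrt{V^i_0}X^i,\sqrt{V^k_0}X^k)=\sqrt{V^i_0V^k_0}\rho^{ik}=v^{ik}(0)/(\Pi^i_0\Pi^k_0)$. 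This yields $\sum v^{ij}(v^{ik}/\Pi^i+v^{jk}/\Pi^j)/v^{3/2}$ minus $2v(0)\cdot v(0)/v(0)^{3/2}$, which is exactly $\psi_B$.
\end{itemize}
For the $\Delta$-term, joint convergence holds over time-scaled versions: following \cite{F21}, for $t=u\theta$ I need the covariance of $X_\theta$ with $X^i_t$. This equals the covariance of $X^i_t$ with itself, giving factor $\sqrt{t/\theta}$ and the regularly varying weight that produces the constant $1/4$.

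\textbf{Step 3: the main obstacle.} The hardest step is justifying that the two-scale remainder is $o(r(\theta)+\sqrt\theta)$ uniformly, with the needed uniform integrability. It also requires that the proof of \cite{F21}, which is stated for a single rate function, extends to a sum of two regularly varying rates with different indices. This matters especially when $H=1/2$ but $r(t)/\sqrt t$ does not converge.

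I will handle this by a subsequence argument. Set $\tilde r=r+\sqrt t$, which is regularly varying with index $\min(H,1/2)$, and write the normalized fluctuation as $\frac{r}{\tilde r}(\cdots)+\frac{\sqrt t}{\tilde r}(\cdots)$. Along any sequence $\theta_m\to0$ on which $r(\theta_m)/\tilde r(\theta_m)$ converges, the linear representation from \cite{F21} applies termwise. The integrals $\int_0^\theta t^{\cdot}r(t)\,\mathrm dt$ produce the factor $1/(2H+3)$ by Karamata's theorem. Since every subsequential limit matches formula~\eqref{basket-IV-expansion}, the full expansion follows.

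Uniform integrability of the weight part is easy, since $0\le\Pi\le1$ and $X^i_\theta$ is bounded in $L^2$ because $\mathsf E[V^i_t]$ is bounded. The quadratic remainders are controlled by Cauchy--Schwarz together with the UI of $Y$. Uniformity in $z$ on compacts is inherited from \cite{F21}.
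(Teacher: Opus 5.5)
Your architecture is the same as the paper's. You treat the basket as a single asset and split $V_\theta/\mathsf E[V_\theta]-1$ into an $r(\theta)$-part driven by the $Y^i$ and a $\sqrt\theta$-part driven by the weights. You then compute the two covariances with the basket return and apply \cite{F21} at the rate $s=r+\sqrt{\cdot}$ along subsequences on which $r/s$ converges. The gap is in the step you yourself call the main obstacle. The $L^1$ control that should make the $\sqrt\theta$-part uniformly integrable and the remainder $o_{L^1}(r(\theta)+\sqrt\theta)$ rests on two claims that the hypotheses do not support.

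\textbf{The two unsupported claims.} First, $X^i_\theta$ is not bounded in $L^2$ merely because $\mathsf E[V^i_t]$ is bounded. We have $\mathsf E[(X^i_\theta)^2]=(V^i_0\theta)^{-1}(\mathsf E[(S^i_\theta/S^i_0)^2]-1)$, which involves the second moment of the price, and that moment is not assumed finite. Take a lognormal $V^i$ independent of $B^i$, which is admissible (for example in the Gaussian factor setting). Then $\mathsf E[(S^i_t/S^i_0)^2]=\mathsf E[\exp(\int_0^tV^i_s\,\mathrm ds)]=\infty$ for every $t>0$.

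Second, only uniform integrability of $Y^i$ is assumed, which is an $L^1$ property, so Cauchy--Schwarz is not available. Multiplying your first-order expansions of $\sqrt{V^i_t}$, $\sqrt{V^j_t}$, $\Pi^i_t$, $\Pi^j_t$ produces cross terms such as $r(t)^2Y^i_tY^j_t$ and $r(t)\sqrt t\,Y^i_t\Delta^j_t$. These need not even be integrable. As written, you have established neither the uniform integrability that \cite{F21} requires nor the stated remainder.

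\textbf{How the paper avoids these moments.}
\begin{itemize}
\item It replaces $X^i_\theta$ by $\bar X^i_\theta=(V^i_0\theta)^{-1/2}\int_0^\theta\sqrt{V^i_t}\,\mathrm dB^i_t$. This has the same weak limit and is $L^2$-bounded by It\^o's isometry.
\item It expands $A^{ij}_t=\Pi^i_t\Pi^j_t$ through It\^o's formula for $\log A^{ij}_t$, whose integrands $A^{ij}_t/A^{ij}_0$ and $\Pi^k_t$ are bounded. The Burkholder--Davis--Gundy inequality, together with uniform integrability of $V^k_t$ (deduced from that of $Y^k$), then gives the $o_{L^1}(\sqrt\theta)$ remainder with no moment condition on $S$.
\item It never multiplies separate expansions of $\sqrt{V^i}$ and $\sqrt{V^j}$. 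Instead it uses $f(x,y)=\sqrt{(1+x)(1+y)}$ with the global bound $|f(x,y)-1|\le C(|x|+|y|)$ for $x,y\ge-1$.
\item It treats $(A^{ij}_\theta-A^{ij}_0)(Q^{ij}_\theta-q^{ij}(\theta))$ as a bounded factor tending to $0$ in probability, times $r(\theta)$, times a uniformly integrable family. Hence this term is $o_{L^1}(r(\theta))$.
\end{itemize}
The basket return enters only through \cite{F21}, which needs its weak convergence but no integrability. With these substitutions your outline goes through.

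\textbf{Normalization slip.} With your normalizations, $X_\theta=(S_\theta/S_0-1)/\sqrt{v(0)\theta}$ and the fluctuation divided by $v(0)\tilde r$, the coefficient from \cite{F21} is $c/(2\tilde H+3)$, not $c\sqrt{v(0)}/(2\tilde H+3)$. Otherwise a single local-volatility asset would acquire an extra factor $\sigma_1(0,S^1_0)$. Your computed $c$ already equals $\psi_A$ for the $Y$-part and $\psi_B$ for the $\Delta$-part; your second bullet uses $c$ directly for $\psi_B$. So the extra multiplication by $\sqrt{v(0)}$ in the first bullet is an arithmetic slip and should be dropped.
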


\begin{remark}[Comparison of the two rates]
\upshape
The ratio $r(\theta)/\sqrt\theta$ is regularly varying with
index $H-1/2$. Hence
\[
    \frac{r(\theta)}{\sqrt\theta}\longrightarrow\infty
    \quad\text{if }H<\frac12,
    \qquad
    \frac{r(\theta)}{\sqrt\theta}\longrightarrow0
    \quad\text{if }H>\frac12.
\]
When $H=1/2$, the ratio is slowly varying, and regular
variation alone does not imply that it converges: it may tend
to zero, a positive constant, or infinity, and it may also
oscillate. This is why the expansion retains both
$r(\theta)$ and $\sqrt\theta$ at the boundary rather than
replacing one by a fixed multiple of the other.
\end{remark}

\begin{remark}[Covariances with the basket]
\upshape
Define the initial covariance rate between the return of
constituent $i$ and the basket return by
\[
    d_i:=\sum_{j=1}^n\Pi_0^j\sqrt{V_0^iV_0^j}\rho^{ij}.
\]
Then $\sum_jv^{ij}(0)=\Pi_0^i d_i$ and
$v(0)=\sum_i\Pi_0^i d_i$. By the symmetry of $v^{ij}(0)$,
the coefficients in Theorem~\ref{thm1} can be written as
\begin{align*}
    \psi_A
    &=\frac{1}{v(0)^{3/2}}
      \sum_{i,k=1}^n\Pi_0^i\Pi_0^k d_i\sqrt{V_0^k}\,\psi^{ki},\\
    \psi_B
    &=\frac{2}{v(0)^{3/2}}
      \left\{\sum_{i=1}^n\Pi_0^i d_i^2-v(0)^2\right\}\\
    &=\frac{2}{v(0)^{3/2}}
      \sum_{i=1}^n\Pi_0^i\bigl(d_i-v(0)\bigr)^2\geq0.
\end{align*}
Thus the weight-fluctuation coefficient is proportional to
the variance of the $d_i$ under the initial basket weights.
\end{remark}

\begin{remark}[Local volatility]
\upshape
Suppose $V_t^i=\sigma_i(t,S_t^i)^2$, with positive bounded
$C^1$ local volatility functions having bounded first derivatives.
Then we may take $r(t)=\sqrt t$, so that $H=1/2$.
Write
\[
    s_i=\sigma_i(0,S_0^i),\qquad
    \ell_i=S_0^i\partial_s\log\sigma_i(0,S_0^i).
\]
The standard delta method gives
$(X_t^i,Y_t^i)_{i=1}^n\Rightarrow(G_i,2s_i\ell_iG_i)_{i=1}^n$,
where $G$ is centered Gaussian with covariance matrix
$(\rho^{ij})$. With $d_i=s_i\sum_j\Pi_0^j s_j\rho^{ij}$
as in the preceding remark, this gives
\begin{align*}
    \psi^{ki}&=2s_i\ell_i\rho^{ki},\\
    \psi_A&=\frac{2}{v(0)^{3/2}}
        \sum_{i=1}^n\Pi_0^i\ell_i d_i^2,\\
    \psi_B&=\frac{2}{v(0)^{3/2}}
        \sum_{i=1}^n\Pi_0^i\bigl(d_i-v(0)\bigr)^2,
\end{align*}
with $v(0)=\sum_i\Pi_0^i d_i>0$.
Thus the coefficient of $z\sqrt\theta$ in
\eqref{basket-IV-expansion} is $(\psi_A+\psi_B)/4$.
For a single asset, $\psi_B=0$ and this coefficient is
$S_0^1\partial_s\sigma_1(0,S_0^1)/2$.
\end{remark}

\begin{remark}[Preceding works]
For local volatility models, the above formula with $H=1/2$ is consistent with the one obtained by Pirjol~\cite{Pirjol}, based on the large-deviation theory developed by Avellaneda et al.~\cite{A1,A2}. See also Bayer and Laurence~\cite{BL}. The same formula also follows from a formal analysis by
Piterbarg~\cite{Piterbarg} based on the Markovian projection of stochastic volatility models when $H=1/2$.
A bivariate SABR model was treated by Forde and Zhang~\cite{FZ}.
A normal SABR model was treated by Hagan et al.~\cite{Hagan}.
For rough volatility models, that is, when $H<1/2$, Friz and Wagenhofer~\cite{FW} gave 
a small noise large deviation theory for rough Bergomi type models
with a basket-skew formula in a closely related but different asymptotic regime.
Our contribution is a rigorous derivation of the short-time asymptotic basket-skew formula for a general class of local and possibly rough stochastic volatility models.
Even in the non-rough case $H=1/2$, the result appears to be new.
\end{remark}

\begin{remark}[Constituent skew]
Let $\sigma^i(k,\theta)$ denote the implied volatility of $S^i$ defined through
\begin{equation*}
    \mathsf{E}[(S^i_0e^k-S^i_\theta)_+] = S^i_0p\left(k,\sigma^i(k,\theta)\sqrt{\theta}\right).
\end{equation*}
Under the assumptions of Theorem~\ref{thm1}, by Corollary 2.1 of Fukasawa~\cite{F21} (see also Theorem 8.11 of Bayer et al.~\cite{RoughBook}), we have
\begin{align*}
    \sigma^i(\sqrt{\theta}z,\theta)
    &=
    \sqrt{\bar v^i(\theta)}
    +
        \frac{\psi^{ii}}{2H+3}r(\theta) z
       +
    o(r(\theta))
\end{align*}
as $\theta\to0$ uniformly in $z$ on compact sets, where
\begin{equation*}
    \bar v^i(\theta) = \frac{1}{\theta} \int_0^\theta \mathsf{E}[V^i_t]\, \mathrm{d}t. 
\end{equation*}
\end{remark}

\begin{remark}[Dominance of the roughest constituent]
\upshape
The preceding result also indicates how the basket skew behaves
when the constituent volatility fluctuations have different
short-time regularities. Suppose, for simplicity, that the
normalized variance fluctuation of constituent $i$ is of order
$\theta^{H_i}$, where $H_i\in(0,1/2]$. Its contribution to the
ATM basket skew is then of order $\theta^{H_i-1/2}$.
Consequently, if
\[
    H_*:=\min_{1\leq i\leq n}H_i,
\]
then, provided the aggregate leverage coefficient associated
with the constituents satisfying $H_i=H_*$ does not vanish, the
basket skew satisfies
\[
    \frac{\sigma(k,\theta)-\sigma(0,\theta)}{k}
    \asymp
    \theta^{H_*-1/2}
\]
for $k = z \sqrt{\theta}$ with $z\neq 0$.
Thus the short-maturity explosion rate of the basket skew is in general the fastest among the explosion rates generated by
its constituents. 
\end{remark}

\begin{proof}
We first recall the result from
 Fukasawa~\cite{F21} (see also Bayer et al.~\cite{RoughBook})
that will be used below. Let
$a:(0,t_0)\to(0,\infty)$ be regularly varying at zero with index
$\alpha\geq 0$ and $a(0+)=0$. Suppose that, jointly,
\[
    \frac1{\sqrt\theta}
    \left(\frac{S_\theta}{S_0}-1\right)
    \Longrightarrow \xi,
    \qquad
    \frac1{a(\theta)}
    \left(
        \frac{V_\theta}{\mathsf E[V_\theta]}-1
    \right)
    \Longrightarrow \eta,
\]
assume that the second family is uniformly integrable and that
the limit $(\xi,\eta)$ is centered Gaussian. If
$V_0>0$, then
\begin{equation}\label{Fukasawa-expansion}
    \sigma(\sqrt\theta z,\theta)
    =
    \sqrt{\widehat v(\theta)}
    +
    \frac{\mathsf E[\xi\eta]}
         {\sqrt{V_0}(2\alpha+3)}
    z\,a(\theta)
    +
    o(a(\theta)),
\end{equation}
uniformly in $z$ on compact sets, where
\[
    \widehat v(\theta)
    :=
    \frac1\theta\int_0^\theta
    \mathsf E[V_t]\,\mathrm dt.
\]
The following steps 1-4 show the required uniform integrability.
The final step computes the limit distribution.

Put
\[
    m^i(t):=\mathsf E[V_t^i],
    \qquad
    A_t^{ij}:=\Pi_t^i\Pi_t^j,
    \qquad
    Q_t^{ij}:=\sqrt{V_t^iV_t^j},
\]
and
\[
    q^{ij}(t):=\mathsf E[Q_t^{ij}].
\]
Then
\[
    v^{ij}(t)
    =
    A_0^{ij}q^{ij}(t)\rho^{ij},
    \qquad
    v(t)=\sum_{i,j=1}^nv^{ij}(t).
\]

\medskip

\noindent
\textit{Step 1: Convergence of the deterministic variance
levels.}

By definition,
\[
    V_t^i
    =
    m^i(t)\bigl(1+r(t)Y_t^i\bigr),
    \qquad
    \mathsf E[Y_t^i]=0.
\]
Uniform integrability of $\{Y_t^i\}_{t\downarrow0}$ implies
that it is bounded in $L^1$. Therefore,
\[
    V_t^i-m^i(t)
    =
    r(t)m^i(t)Y_t^i.
\]
We first note that $m^i(t)\to V_0^i$. Indeed, continuity of
$V^i$ gives
\[
    V_t^i\longrightarrow V_0^i
    \qquad\text{in probability}.
\]
Moreover, since $Y_t^i$ is tight and $r(t)\to0$,
\[
    \frac{V_t^i}{m^i(t)}-1
    =
    r(t)Y_t^i
    \longrightarrow0
    \qquad\text{in probability}.
\]
It follows that
\[
    m^i(t)\longrightarrow V_0^i.
\]
Here we use that $m^i(t)$ and $V_0^i$ are deterministic and
$V_0^i>0$.

Consequently, after decreasing $t_0$ if necessary, there
exist constants $0<c<C<\infty$ such that
\[
    c\leq m^i(t)\leq C,
    \qquad
    0<t<t_0,
\]
for every $i$. In particular,
\[
    \sup_{0<t<t_0}\mathsf E[V_t^i]<\infty.
\]
Furthermore,
\[
    V_t^i-m^i(t)
    =
    r(t)m^i(t)Y_t^i
    \longrightarrow0
\]
in $L^1$. The family $\{V_t^i\}_{t\downarrow0}$ is therefore
uniformly integrable.

\medskip

\noindent
\textit{Step 2: Expansion of the geometric-mean volatility.}

We now prove that
\begin{equation}\label{Q-expansion-full}
    \frac{
        Q_t^{ij}-q^{ij}(t)
    }{r(t)}
    =
    \frac{\sqrt{m^i(t)m^j(t)}}{2}
    \bigl(Y_t^i+Y_t^j\bigr)
    +
    o_{L^1}(1).
\end{equation}

Since
\[
    V_t^i=m^i(t)\bigl(1+r(t)Y_t^i\bigr),
\]
we have
\begin{equation}\label{Q-factorization-full}
    Q_t^{ij}
    =
    \sqrt{m^i(t)m^j(t)}
    \sqrt{
        \bigl(1+r(t)Y_t^i\bigr)
        \bigl(1+r(t)Y_t^j\bigr)
    }.
\end{equation}
Notice that
\[
    1+r(t)Y_t^i\geq0,
    \qquad
    1+r(t)Y_t^j\geq0.
\]

Consider
\[
    f(x,y):=\sqrt{(1+x)(1+y)},
    \qquad x,y\geq-1.
\]
The function $f$ is differentiable at $(0,0)$, with
\[
    f(0,0)=1,
    \qquad
    \partial_xf(0,0)
    =
    \partial_yf(0,0)
    =
    \frac12.
\]
Thus, for every $M>0$,
\begin{equation}\label{compact-Taylor}
    \sup_{|x|+|y|\leq M}
    \left|
        \frac{
            f(r(t)x,r(t)y)-1
        }{r(t)}
        -
        \frac{x+y}{2}
    \right|
    \longrightarrow0.
\end{equation}

We also need a global bound. There exists a constant $C$
such that
\begin{equation}\label{global-square-root-bound}
    \left|
        \sqrt{(1+x)(1+y)}-1
    \right|
    \leq
    C\bigl(|x|+|y|\bigr),
    \qquad x,y\geq-1.
\end{equation}
To see this, if $|x|+|y|\leq1/2$, the assertion follows from
the mean value theorem, since the derivatives of $f$ are
bounded on this set. If $|x|+|y|>1/2$, then
\[
    \sqrt{(1+x)(1+y)}
    \leq
    \frac{(1+x)+(1+y)}2
    \leq
    1+\frac{|x|+|y|}{2}.
\]
The positive part of
$\sqrt{(1+x)(1+y)}-1$ is therefore bounded by
$(|x|+|y|)/2$, while its negative part is bounded by $1$,
which in the present region is bounded by
$2(|x|+|y|)$. This proves
\eqref{global-square-root-bound}.

Define
\[
    E_t^M
    :=
    \{|Y_t^i|+|Y_t^j|\leq M\}.
\]
On $E_t^M$, \eqref{compact-Taylor} gives
\begin{align*}
 &\sup_{\omega\in E_t^M}
 \left|
     \frac{
         \sqrt{
             (1+r(t)Y_t^i)(1+r(t)Y_t^j)
         }-1
     }{r(t)}
     -
     \frac{Y_t^i+Y_t^j}{2}
 \right|
 \longrightarrow0.
\end{align*}
On the complement, \eqref{global-square-root-bound} gives
\begin{align*}
 &\left|
     \frac{
         \sqrt{
             (1+r(t)Y_t^i)(1+r(t)Y_t^j)
         }-1
     }{r(t)}
     -
     \frac{Y_t^i+Y_t^j}{2}
 \right|
\\
 &\qquad\leq
 C\bigl(|Y_t^i|+|Y_t^j|\bigr).
\end{align*}
Consequently,
\begin{align*}
 &\mathsf E\left|
     \frac{
         \sqrt{
             (1+r(t)Y_t^i)(1+r(t)Y_t^j)
         }-1
     }{r(t)}
     -
     \frac{Y_t^i+Y_t^j}{2}
 \right|
\\
 &\quad\leq
 o_M(1)
 +
 C\mathsf E\left[
     \bigl(|Y_t^i|+|Y_t^j|\bigr)
     \mathbf 1_{\{
         |Y_t^i|+|Y_t^j|>M
     \}}
 \right].
\end{align*}
First letting $t\downarrow0$ and then $M\to\infty$, uniform
integrability of $Y^i$ and $Y^j$ gives
\begin{equation}\label{square-root-L1-expansion}
    \frac{
        \sqrt{
            (1+r(t)Y_t^i)(1+r(t)Y_t^j)
        }-1
    }{r(t)}
    =
    \frac12\bigl(Y_t^i+Y_t^j\bigr)
    +
    o_{L^1}(1).
\end{equation}

Multiplying \eqref{square-root-L1-expansion} by
$\sqrt{m^i(t)m^j(t)}$ and using
\eqref{Q-factorization-full}, we obtain
\begin{equation}\label{uncentered-Q-expansion}
    \frac{
        Q_t^{ij}-\sqrt{m^i(t)m^j(t)}
    }{r(t)}
    =
    \frac{\sqrt{m^i(t)m^j(t)}}2
    \bigl(Y_t^i+Y_t^j\bigr)
    +
    o_{L^1}(1).
\end{equation}
Since
\[
    \mathsf E[Y_t^i]
    =
    \mathsf E[Y_t^j]
    =
    0,
\]
taking expectations in \eqref{uncentered-Q-expansion} yields
\[
    q^{ij}(t)
    =
    \sqrt{m^i(t)m^j(t)}
    +
    o(r(t)).
\]
Subtracting expectations in
\eqref{uncentered-Q-expansion} proves
\eqref{Q-expansion-full}. It also follows that
\begin{equation}\label{normalized-Q-expansion-full}
    \frac{
        Q_t^{ij}-q^{ij}(t)
    }{
        r(t)q^{ij}(t)
    }
    =
    \frac12\bigl(Y_t^i+Y_t^j\bigr)
    +
    o_{L^1}(1).
\end{equation}
The family on the left-hand side is uniformly integrable as
$t\downarrow0$.

\medskip

\noindent
\textit{Step 3: Dynamics and first-order expansion of the
portfolio weights.}

By It\^o's formula,
\[
    \mathrm d\log S_t
    =
    \sum_{i=1}^n
    \Pi_t^i\frac{\mathrm dS_t^i}{S_t^i}
    -
    \frac12\mathrm d\langle\log S\rangle_t.
\]
Moreover,
\[
    \mathrm d\langle\log S\rangle_t
    =
    V_t\,\mathrm dt,
\]
where
\[
    V_t
    =
    \sum_{i,j=1}^n
    A_t^{ij}Q_t^{ij}\rho^{ij}.
\]

Since
\[
    \log A_t^{ij}
    =
    \log S_t^i+\log S_t^j-2\log S_t,
\]
It\^o's formula gives
\begin{align}
    \frac{\mathrm dA_t^{ij}}{A_t^{ij}}
    &=
    \sqrt{V_t^i}\,\mathrm dB_t^i
    +
    \sqrt{V_t^j}\,\mathrm dB_t^j
\nonumber\\
    &\quad
    -
    2\sum_{k=1}^n
    \Pi_t^k\sqrt{V_t^k}\,\mathrm dB_t^k
    +
    b_t^{ij}\,\mathrm dt,                  \label{A-dynamics-full}
\end{align}
where
\begin{equation}\label{b-bound-full}
    |b_t^{ij}|
    \leq
    C\sum_{k=1}^nV_t^k.
\end{equation}

Define
\[
    \bar X_\theta^i
    :=
    \frac1{\sqrt{V_0^i\theta}}
    \int_0^\theta
    \sqrt{V_t^i}\,\mathrm dB_t^i,
\]
and
\[
    \widetilde Z_\theta^{ij}
    :=
    \sqrt{V_0^i}\bar X_\theta^i
    +
    \sqrt{V_0^j}\bar X_\theta^j
    -
    2\sum_{k=1}^n
    \Pi_0^k\sqrt{V_0^k}\bar X_\theta^k.
\]
The constituent price dynamics give
\[
    X_\theta^i-\bar X_\theta^i
    =
    \frac1{\sqrt{V_0^i\theta}}
    \int_0^\theta
    \left(
        \frac{S_t^i}{S_0^i}-1
    \right)
    \sqrt{V_t^i}\,\mathrm dB_t^i.
\]
By continuity of $S^i$ and $V^i$,
    $X_\theta^i-\bar X_\theta^i
    \to 0$ 
 in probability.
Consequently, the $\bar X_\theta^i$ have the same joint limit
as the $X_\theta^i$. We claim that
\begin{equation}\label{weight-expansion-full}
    \frac1{\sqrt\theta}
    \left(
        \frac{A_\theta^{ij}}{A_0^{ij}}-1
    \right)
    =
    \widetilde Z_\theta^{ij}
    +
    o_{L^1}(1).
\end{equation}

Integrating \eqref{A-dynamics-full} and subtracting
$\widetilde Z_\theta^{ij}$ gives
\[
    \frac1{\sqrt\theta}
    \left(
        \frac{A_\theta^{ij}}{A_0^{ij}}-1
    \right)
    -
    \widetilde Z_\theta^{ij}
    =
    R_\theta^{ij,1}
    +
    R_\theta^{ij,2}
    +
    R_\theta^{ij,3},
\]
where
\begin{align*}
    R_\theta^{ij,1}
    &:=
    \frac1{\sqrt\theta}
    \int_0^\theta
    \left(
        \frac{A_t^{ij}}{A_0^{ij}}-1
    \right)
    \left(
        \sqrt{V_t^i}\,\mathrm dB_t^i
        +
        \sqrt{V_t^j}\,\mathrm dB_t^j
    \right.
\\
    &\hspace{4.8cm}\left.
        -
        2\sum_{k=1}^n
        \Pi_t^k\sqrt{V_t^k}\,\mathrm dB_t^k
    \right),
\end{align*}
\[
    R_\theta^{ij,2}
    :=
    -
    \frac2{\sqrt\theta}
    \sum_{k=1}^n
    \int_0^\theta
    \bigl(\Pi_t^k-\Pi_0^k\bigr)
    \sqrt{V_t^k}\,\mathrm dB_t^k,
\]
and
\[
    R_\theta^{ij,3}
    :=
    \frac1{\sqrt\theta}
    \int_0^\theta
    \frac{A_t^{ij}}{A_0^{ij}}
    b_t^{ij}\,\mathrm dt.
\]
Here we used
\[
    \frac{A_t^{ij}}{A_0^{ij}}\Pi_t^k-\Pi_0^k
    =
    \left(
        \frac{A_t^{ij}}{A_0^{ij}}-1
    \right)\Pi_t^k
    +
    \bigl(\Pi_t^k-\Pi_0^k\bigr).
\]

Because
\[
    0\leq A_t^{ij}\leq1,
    \qquad
    0\leq\Pi_t^k\leq1,
\]
continuity gives
\[
    A_t^{ij}\longrightarrow A_0^{ij},
    \qquad
    \Pi_t^k\longrightarrow\Pi_0^k
\]
in probability. Uniform integrability of $\{V_t^k\}_t$
therefore implies
\[
    \mathsf E\left[
        |A_t^{ij}-A_0^{ij}|^2V_t^k
    \right]
    \longrightarrow0
\]
and
\[
    \mathsf E\left[
        |\Pi_t^k-\Pi_0^k|^2V_t^\ell
    \right]
    \longrightarrow0.
\]
The Burkholder--Davis--Gundy inequality and Ces\`aro averaging
give
\[
    R_\theta^{ij,1}\longrightarrow0,
    \qquad
    R_\theta^{ij,2}\longrightarrow0
\]
in $L^1$. Also, by \eqref{b-bound-full},
\[
    \mathsf E|R_\theta^{ij,3}|
    \leq
    \frac{C}{\sqrt\theta}
    \int_0^\theta
    \sum_{k=1}^n\mathsf E[V_t^k]\,\mathrm dt
    =
    O(\sqrt\theta).
\]
This proves \eqref{weight-expansion-full}.

The same argument also yields
\begin{equation*}
    \mathsf E|A_t^{ij}-A_0^{ij}|
    \leq C\sqrt t.
\end{equation*}

\medskip

\noindent
\textit{Step 4: Two-scale expansion of the basket variance.}

Decompose
\begin{equation}\label{basket-V-decomposition-full}
 V_t=\sum_{i,j=1}^n A_0^{ij}Q_t^{ij}\rho^{ij}+D_t,
 \qquad
 D_t:=\sum_{i,j=1}^n(A_t^{ij}-A_0^{ij})Q_t^{ij}\rho^{ij}.
\end{equation}
For each $i,j$,
\begin{align}
 (A_t^{ij}-A_0^{ij})Q_t^{ij}
 &=q^{ij}(t)(A_t^{ij}-A_0^{ij})
 +(A_t^{ij}-A_0^{ij})(Q_t^{ij}-q^{ij}(t)).
 \nonumber
\end{align}
By \eqref{weight-expansion-full},
\begin{align}
 q^{ij}(\theta)(A_\theta^{ij}-A_0^{ij})
 &=\sqrt\theta\,A_0^{ij}q^{ij}(\theta)
   \widetilde Z_\theta^{ij}+o_{L^1}(\sqrt\theta).
   \nonumber
\end{align}
On the other hand, $A_\theta^{ij}-A_0^{ij}$ is bounded and
converges to zero in probability, whereas
\[
 \left\{
  \frac{Q_\theta^{ij}-q^{ij}(\theta)}{r(\theta)}
 \right\}_{\theta\downarrow0}
\]
is uniformly integrable by Step~2. Hence
\begin{equation*}
 (A_\theta^{ij}-A_0^{ij})(Q_\theta^{ij}-q^{ij}(\theta))
 =o_{L^1}(r(\theta)).
\end{equation*}
Consequently,
\begin{align}
 D_\theta
 &=\sqrt\theta\sum_{i,j=1}^n
 v^{ij}(\theta)\widetilde Z_\theta^{ij}
 +o_{L^1}(r(\theta))+o_{L^1}(\sqrt\theta).
 \label{D-two-scale}
\end{align}
Since $\mathsf E[\bar{X}_\theta^i]=0$, we have
$\mathsf E[\widetilde Z_\theta^{ij}]=0$. Taking expectations in
\eqref{basket-V-decomposition-full} and using
\eqref{D-two-scale}, we obtain
\begin{equation}\label{EV-two-scale}
 \mathsf E[V_\theta]
 =v(\theta)+o\bigl(r(\theta)+\sqrt\theta\bigr).
\end{equation}

Define
\[
 \eta_{A,\theta}:=
 \frac1{2v(\theta)}\sum_{i,j=1}^n
 v^{ij}(\theta)(Y_\theta^i+Y_\theta^j)
\]
and
\[
 \eta_{B,\theta}:=
 \frac1{v(\theta)}\sum_{i,j=1}^n
 v^{ij}(\theta)\widetilde Z_\theta^{ij}.
\]
Subtracting expectations in \eqref{basket-V-decomposition-full},
and using \eqref{Q-expansion-full}, \eqref{normalized-Q-expansion-full},  \eqref{D-two-scale} and
\eqref{EV-two-scale}, gives
\begin{align}
 \frac{V_\theta}{\mathsf E[V_\theta]}-1
 &=r(\theta)\eta_{A,\theta}
 +\sqrt\theta\eta_{B,\theta}
 +o_{L^1}(r(\theta))+o_{L^1}(\sqrt\theta).
 \label{normalized-V-two-scale}
\end{align}

Both $\{\eta_{A,\theta}\}$ and $\{\eta_{B,\theta}\}$ are
uniformly integrable. The first assertion follows from the
assumed uniform integrability of the $Y_\theta^i$. For the
second, observe that
\[
 \sup_{0<\theta<\theta_0}\mathsf E|\bar X_\theta^i|^2
 =\sup_{0<\theta<\theta_0}\frac1\theta\int_0^\theta
 \frac{\mathsf E[V_t^i]}{V_0^i}\,\mathrm dt<\infty.
\]
Thus the $\bar X_\theta^i$, and hence the
$\widetilde Z_\theta^{ij}$ and $\eta_{B,\theta}$, are uniformly
integrable.

\medskip

\noindent
\textit{Step 5: Application of the short-time implied-volatility
expansion.}

Let
\[
    (X^1,\dots,X^n,Y^1,\dots,Y^n)
\]
denote the centered Gaussian limit of
\[
    (X_\theta^1,\dots,X_\theta^n,
     Y_\theta^1,\dots,Y_\theta^n),
\]
and define
\[
    \xi_\theta
    :=
    \sum_{k=1}^n
    \Pi_0^k\sqrt{V_0^k}X_\theta^k,
    \qquad
    \xi
    :=
    \sum_{k=1}^n
    \Pi_0^k\sqrt{V_0^k}X^k.
\]
By the definition of $X_\theta^k$ and the identity
$\Pi_0^k=S_0^k/S_0$, we have the exact relation
\begin{equation}\label{return-xi-relation}
\begin{split}
    \xi_\theta
    =
    \frac1{\sqrt\theta}
    \sum_{k=1}^n
    \frac{S_0^k}{S_0}
    \left(
        \frac{S_\theta^k}{S_0^k}-1
    \right)
    =
    \frac1{\sqrt\theta}
    \left(
        \frac{S_\theta}{S_0}-1
    \right).
\end{split}
\end{equation}
Consequently, $\xi_\theta\Longrightarrow\xi$.
Moreover,
\[
    \mathsf E[\xi^2]
    =
    \sum_{k,\ell=1}^n
    \Pi_0^k\Pi_0^\ell
    \sqrt{V_0^kV_0^\ell}\rho^{k\ell}
    =
    v(0).
\]

By \eqref{normalized-V-two-scale}, the limiting volatility-fluctuation variable is
\[
 \eta_A:=\frac1{2v(0)}\sum_{i,j=1}^n
 v^{ij}(0)(Y^i+Y^j).
\]
Therefore,
\begin{align}
 \mathsf E[\xi\eta_A]
 &=\frac1{2v(0)}\sum_{i,j,k=1}^n
 v^{ij}(0)\Pi_0^k\sqrt{V_0^k}
 (\psi^{ki}+\psi^{kj})
 =\sqrt{v(0)}\psi_A.
 \label{psi-A-full}
\end{align}

The limiting weight-fluctuation variable is
\[
 \eta_B:=\frac1{v(0)}\sum_{i,j=1}^n v^{ij}(0)
 \left(
  \sqrt{V_0^i}X^i+\sqrt{V_0^j}X^j-2\xi
 \right).
\]
For each $i,j$,
\begin{align*}
 &\mathsf E\left[\xi\left(
  \sqrt{V_0^i}X^i+\sqrt{V_0^j}X^j-2\xi
 \right)\right]
 =\sum_{k=1}^n\left(
  \frac{v^{ik}(0)}{\Pi_0^i}
  +\frac{v^{jk}(0)}{\Pi_0^j}
 \right)-2v(0).
\end{align*}
It follows that
\begin{align}
 \mathsf E[\xi\eta_B]
 &=\sum_{i,j,k=1}^n\frac{v^{ij}(0)}{v(0)}
 \left(
  \frac{v^{ik}(0)}{\Pi_0^i}
  +\frac{v^{jk}(0)}{\Pi_0^j}
 \right)-2v(0)
 =\sqrt{v(0)} \psi_B.
 \label{psi-B-full}
\end{align}

Before applying \eqref{Fukasawa-expansion}, we replace its
deterministic volatility level by the one appearing in the
statement of the theorem. By \eqref{EV-two-scale} and
Karamata's theorem (noting that $r(\theta)+\sqrt\theta$ is
regularly varying with index $H$ if $0\leq H<1/2$, and with
index $1/2$ if $H\geq1/2$),
\[
    \widehat v(\theta)
    =
    \frac1\theta\int_0^\theta\mathsf E[V_t]\,\mathrm dt
    =
    \bar v(\theta)
    +o\bigl(r(\theta)+\sqrt\theta\bigr).
\]
Since $v(0)>0$, it follows that
\begin{equation}\label{deterministic-level-replacement}
    \sqrt{\widehat v(\theta)}
    =
    \sqrt{\bar v(\theta)}
    +o\bigl(r(\theta)+\sqrt\theta\bigr).
\end{equation}

We finally apply \eqref{Fukasawa-expansion}. Put
\[
 s(\theta):=r(\theta)+\sqrt\theta.
\]
Then $s$ is regularly varying at zero with index
\[
    \kappa:=H\wedge\frac12,
\]
and
\begin{align}
 \frac{V_\theta}{\mathsf E[V_\theta]}-1
 &=s(\theta)\left\{
  \frac{r(\theta)}{s(\theta)}\eta_{A,\theta}
  +\frac{\sqrt\theta}{s(\theta)}\eta_{B,\theta}
 \right\}+o_{L^1}(s(\theta)).
 \label{s-normalization}
\end{align}

Given any sequence $\theta_m\downarrow0$, select a subsequence
along which, after relabeling,
\[
 \alpha:=\lim_{m\to\infty}
 \frac{r(\theta_m)}{s(\theta_m)}
\]
exists. Along this subsequence, the limiting normalized
variance fluctuation is
\[
 \alpha\eta_A+(1-\alpha)\eta_B.
\]
The corresponding family in \eqref{s-normalization} is uniformly
integrable, and its joint limit with $\xi_\theta$ is centered
Gaussian. Hence \eqref{Fukasawa-expansion}, applied with the rate
$s$ and index $\kappa$, gives
\begin{align*}
 \sigma(\sqrt\theta z,\theta)
 &=\sqrt{\bar v(\theta)}
 +\frac{z}{\sqrt{v(0)}}
 \frac{\alpha\mathsf E[\xi\eta_A]
 +(1-\alpha)\mathsf E[\xi\eta_B]}
 {2\kappa+3}s(\theta)
 +o(s(\theta)),
\end{align*}
uniformly in $z$ on compact sets. The displayed expansion is
equivalent to the one in the theorem. Indeed, regular variation
implies
\[
\begin{array}{lll}
 H<1/2: & \alpha=1, & \kappa=H,\\
 H=1/2: & \alpha\in[0,1], & \kappa=1/2,\\
 H>1/2: & \alpha=0, & \kappa=1/2.
\end{array}
\]
Consequently, along the selected subsequence,
\begin{align*}
 &\frac{\alpha\mathsf E[\xi\eta_A]
 +(1-\alpha)\mathsf E[\xi\eta_B]}
 {2\kappa+3}s(\theta)
 \\
 &\qquad=
 \frac{\mathsf E[\xi\eta_A]}{2H+3}r(\theta)
 +\frac{\mathsf E[\xi\eta_B]}4\sqrt\theta
 +o(s(\theta)).
\end{align*}
Using \eqref{psi-A-full}, \eqref{psi-B-full}, and
\eqref{deterministic-level-replacement}, we obtain
\eqref{basket-IV-expansion} along this subsequence. Since every
sequence tending to zero has such a subsequence, the
subsequence principle yields \eqref{basket-IV-expansion} along
the full limit $\theta\downarrow0$.
\end{proof}

\section{Gaussian factor volatility models}
\label{sec:Gaussian-factor-models}

We now specialize Theorem~\ref{thm1} to volatility models driven
by Gaussian Volterra factors. Examples include multifactor
Bergomi models~\cite{Bergomi}, rough Bergomi
models~\cite{RoughBook}, and quintic volatility
models~\cite{AbiJaber}.

Let $(W^1,\dots,W^{d+n})$ be a $(d+n)$-dimensional standard
Brownian motion, and let $\{\mathscr F_t\}$ be its augmented
natural filtration. Suppose that
\begin{equation*}
    B_t^i=\sum_{l=1}^{d+n}\rho_l^iW_t^l,
    \qquad
    \sum_{l=1}^{d+n}|\rho_l^i|^2=1,
\end{equation*}
so that
\[
    \rho^{ij}
    =\sum_{l=1}^{d+n}\rho_l^i\rho_l^j.
\]
Define the residual return-covariance matrix
\begin{equation}\label{residual-covariance}
    C^\perp_{ij}
    :=\sum_{l=d+1}^{d+n}\rho_l^i\rho_l^j,
    \qquad i,j=1,\dots,n.
\end{equation}
Let $Z=(Z^1,\dots,Z^q)$ be the centered Gaussian Volterra
process defined by
\begin{equation*}
    Z_t^a
    =\sum_{l=1}^d\int_0^tK_{al}(t-s)\,\mathrm dW_s^l,
    \qquad a=1,\dots,q.
\end{equation*}
We assume that each kernel $K_{al}$ is locally square
integrable: for any $T>0$,
\begin{equation}\label{kernel-local-L2-bound}
\int_0^T
    K_{al}(t)^2 \, \mathrm{d}t < \infty
    \qquad a=1,\dots,q,\quad l=1,\dots,d.
\end{equation}
Assume also that $Z$ admits a continuous modification. Let
$r:(0,t_0)\to(0,\infty)$ be regularly varying at zero with
index $H\in(0,1/2]$. We assume that
\begin{equation}\label{kernel-short-time-assumption}
    \kappa_{al}
    :=\lim_{u\downarrow0}\frac{\sqrt u}{r(u)}K_{al}(u)
\end{equation}
exists and is finite for every $a,l$.

In this and the next sections,
we suppose that the constituent price processes $S^i$ are true
martingales with the constituent spot variance processes $V^i$ of the form
\begin{equation}\label{general-Gaussian-volatility}
    V_t^i=f^i(t,Z_t),
    \qquad i=1,\dots,n,
\end{equation}
where $f^i:[0,\infty)\times\mathbb R^q\to(0,\infty)$ is continuous
and continuously differentiable in the factor variables. 
Put
\[
    V_0^i:=f^i(0,0),
    \qquad
    \beta_a^i:=\partial_{z_a}\log f^i(0,0).
\]
and define
\begin{equation*}
    \gamma^{ki}
    :=
    \sum_{a=1}^q\sum_{l=1}^d
    \beta_a^i\rho_l^k\kappa_{al}.
\end{equation*}
Thus $\gamma^{ki}$ is the short-time exposure of the volatility
of constituent $i$ to the Brownian shock driving constituent
$k$.

\begin{theorem}\label{thm:Gaussian-factor-skew}
For every $i$,
assume that $(t,z)\mapsto\nabla_zf^i(t,z)$ is continuous at
$(0,0)$ and that there exist constants
$C,c>0$ such that
\begin{equation}\label{F-exponential-growth}
    f^i(t,z)
    +\frac1{f^i(t,z)}
    +\left|\nabla_z\log f^i(t,z)\right|
    \leq C\exp(c|z|),
    \qquad (t,z)\in[0,t_0]\times\mathbb R^q.
\end{equation}
  Then the assumptions of Theorem~\ref{thm1} hold with the rate $r$ and
\begin{equation}\label{psi-Gaussian-factor}
    \psi^{ki}
    =\frac{2}{2H+1}\gamma^{ki}.
\end{equation}
In particular,
\begin{equation}\label{psi-A-Gaussian-factor}
    \psi_A
    =
    \frac1{v(0)^{3/2}(2H+1)}
    \sum_{i,j,k=1}^n
    v^{ij}(0)\Pi_0^k\sqrt{V_0^k}
    \bigl(\gamma^{ki}+\gamma^{kj}\bigr),
\end{equation}
and if $v(0)>0$,
\begin{align}
    \sigma(\sqrt\theta z,\theta)
    &=\sqrt{\bar v(\theta)}
    +
    \frac{z r(\theta)}
         {v(0)^{3/2}(2H+1)(2H+3)}
    \sum_{i,j,k=1}^n
    v^{ij}(0)\Pi_0^k\sqrt{V_0^k}
    \bigl(\gamma^{ki}+\gamma^{kj}\bigr)
    \nonumber\\
    &\quad+
    \frac{\psi_B}{4}z\sqrt\theta
    +o\bigl(r(\theta)+\sqrt\theta\bigr),
    \label{Gaussian-factor-IV-expansion}
\end{align}
uniformly in $z$ on compact sets.
\end{theorem}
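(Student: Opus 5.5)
We verify the hypotheses of Theorem~\ref{thm1} (uniform integrability of $Y^i_\theta$ and the joint Gaussian limit of $(X_\theta,Y_\theta)$) and compute the covariance $\psi^{ki}$. The formulas for $\psi_A$ and the expansion \eqref{Gaussian-factor-IV-expansion} then follow by substituting \eqref{psi-Gaussian-factor} into Theorem~\ref{thm1}.

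\emph{Step 1: Gaussian scaling of the factors.} Set $\zeta_\theta:=Z_\theta/r(\theta)$. By \eqref{kernel-short-time-assumption} and regular variation, the kernel satisfies $K_{al}(\theta u)\sqrt\theta/r(\theta)\to\kappa_{al}u^{H-1/2}$ for each $u\in(0,1]$. A Potter bound gives an integrable domination, since $H>0$ makes $u^{2H-1-\epsilon}$ integrable. Hence $\zeta_\theta$ is centered Gaussian with covariance converging, and jointly with $\bar X_\theta^k\approx\theta^{-1/2}B^k_\theta$ the vector is Gaussian. The cross-covariance is
\[
\mathsf E[\theta^{-1/2}B^k_\theta\,\zeta^a_\theta]=\sum_l\rho^k_l\frac{1}{\sqrt\theta\, r(\theta)}\int_0^\theta K_{al}(u)\,\mathrm du\to\sum_l\rho^k_l\kappa_{al}\int_0^1u^{H-1/2}\,\mathrm du=\frac{2}{2H+1}\sum_l\rho^k_l\kappa_{al}.
\]
This step uses Karamata for the integral of a regularly varying function of index $H-1/2>-1$. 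Replacing $X^k_\theta$ by $\theta^{-1/2}B^k_\theta$ is justified because $V^k$ is continuous and bounded in $L^2$ near $0$; the latter follows from the exponential growth bound together with Gaussian tails of $\sup_{t\le t_0}|Z_t|$, obtained via Fernique/Borell--TIS using continuity of $Z$.

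\emph{Step 2: Delta method for $Y^i$.} Write
\[
\log f^i(\theta,Z_\theta)-\log f^i(\theta,0)=\int_0^1\nabla_z\log f^i(\theta,sZ_\theta)\cdot Z_\theta\,\mathrm ds .
\]
Continuity of $\nabla_z f^i$ at $(0,0)$ and $Z_\theta\to0$ then give
\[
\frac{V^i_\theta}{f^i(\theta,0)}-1=r(\theta)\,\beta^i\cdot\zeta_\theta+o_P(r(\theta)).
\]
Dividing by $\mathsf E[V^i_\theta]$ instead of $f^i(\theta,0)$ changes only a deterministic factor. That factor is $1+o(r)$ once uniform integrability is established, since the centered expansion has mean $o(r)$. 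Consequently $Y^i_\theta\Rightarrow\beta^i\cdot\zeta$, jointly with $X$. The resulting covariance is $\psi^{ki}=\sum_a\beta^i_a\,\mathsf E[X^k\zeta^a]=\frac{2}{2H+1}\gamma^{ki}$.

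\emph{Step 3: Uniform integrability (main obstacle).} We bound $|Y^i_\theta|$ by
\[
|V^i_\theta/f^i(\theta,0)-1|/r(\theta)\le C e^{c'|Z_\theta|}|\zeta_\theta| ,
\]
using the mean value theorem on $\log f^i$ and $|e^x-1|\le|x|e^{|x|}$, with growth bounds from \eqref{F-exponential-growth} also controlling $1/f^i(\theta,0)$. Because $\zeta_\theta$ is Gaussian with bounded covariance and $Z_\theta$ has uniformly Gaussian tails, this family is bounded in $L^2$, and hence uniformly integrable. The delicate points are checking that $\mathsf E[V_\theta^i]/f^i(\theta,0)$ stays bounded away from $0$ and $\infty$, and that the normalization via $\mathsf E[V^i_\theta]$ preserves uniform integrability. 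Both reduce to the same exponential-moment bound, so we would establish that estimate first and derive everything else from it.
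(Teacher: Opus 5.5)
Your proposal is correct and follows essentially the same route as the paper: a Gaussian delta method for $V^i_\theta$ around $f^i(\theta,0)$ controlled by the exponential growth bound and Gaussian exponential moments, followed by the Karamata-type computation $\mathsf E[X^k\zeta^a]=\frac{2}{2H+1}\sum_l\rho^k_l\kappa_{al}$. The differences are cosmetic. You expand $\log f^i$ and obtain uniform integrability from an $L^2$ bound before upgrading the $o_{\mathsf P}$ remainder to $L^1$, whereas the paper expands $f^i$ itself and gets the $o_{L^1}(r(\theta))$ remainder directly; your Potter-bound argument also supplies the moment bounds on $Z_\theta/r(\theta)$ that the paper only asserts. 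One caution for Step~3: bound the factor $e^{|x|}$ by $\max\{f^i(\theta,Z_\theta)/f^i(\theta,0),\,f^i(\theta,0)/f^i(\theta,Z_\theta)\}\le C^2e^{c|Z_\theta|}$ using the bounds on $f^i$ and $1/f^i$, not by inserting the mean-value bound on $|x|$, which would give a non-integrable double exponential.
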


\begin{proof}
We first verify the delta-method expansion required in
Theorem~\ref{thm1}. The fundamental theorem of calculus gives
\begin{align*}
    f^i(\theta,Z_\theta)-f^i(\theta,0)
    &=\int_0^1\nabla_zf^i(\theta,uZ_\theta)
      \cdot Z_\theta\,\mathrm du\\
    &=\nabla_zf^i(0,0)\cdot Z_\theta+R_\theta^i,
\end{align*}
where
\[
    R_\theta^i
    :=\int_0^1
      \left\{\nabla_zf^i(\theta,uZ_\theta)
      -\nabla_zf^i(0,0)\right\}
      \cdot Z_\theta\,\mathrm du.
\]
The normalized Gaussian family
$\{Z_\theta/r(\theta)\}_{\theta\downarrow0}$ has uniformly
bounded moments of every order. Hence continuity at $(0,0)$,
\eqref{F-exponential-growth}, and the Gaussian exponential
moment bounds imply
\[
    R_\theta^i=o_{L^1}(r(\theta)).
\]
It follows that
\begin{equation}\label{uncentered-Gaussian-delta-expansion}
    V_\theta^i
    =f^i(\theta,0)
    +\sum_{a=1}^q\partial_{z_a}f^i(0,0)Z_\theta^a
    +o_{L^1}(r(\theta)).
\end{equation}
Since $\mathsf E[Z_\theta^a]=0$, taking expectations gives
\[
    \mathsf E[V_\theta^i]=f^i(\theta,0)+o(r(\theta)).
\]
Because $f^i(\theta,0)\to V_0^i>0$, subtracting expectations
in \eqref{uncentered-Gaussian-delta-expansion} and dividing by
$r(\theta)\mathsf E[V_\theta^i]$ yields
\begin{equation}\label{Gaussian-delta-expansion}
    \frac1{r(\theta)}
    \left(
        \frac{V_\theta^i}{\mathsf E[V_\theta^i]}-1
    \right)
    =\sum_{a=1}^q
      \beta_a^i\frac{Z_\theta^a}{r(\theta)}
      +o_{L^1}(1).
\end{equation}
The same exponential-moment estimate shows that the families
on the left-hand side are uniformly integrable.

By \eqref{kernel-short-time-assumption},
\[
    \left(
        \frac{Z_\theta^1}{r(\theta)},\dots,
        \frac{Z_\theta^q}{r(\theta)}
    \right)
\]
converges to a centered Gaussian vector $(Z^1,\dots,Z^q)$ with
covariance matrix
\begin{equation}\label{limiting-factor-covariance}
    \mathsf E[Z^aZ^b]
    =\frac1{2H}\sum_{l=1}^d\kappa_{al}\kappa_{bl}.
\end{equation}
It follows from \eqref{Gaussian-delta-expansion} that
\[
    Y^i=\sum_{a=1}^q\beta_a^iZ^a.
\]
Moreover,
\[
    X_\theta^k
    =\frac{B_\theta^k}{\sqrt\theta}+o_{\mathsf P}(1).
\]
Consequently, the joint limit of the $X_\theta^k$ and
$Y_\theta^i$ is centered Gaussian. Its cross-covariance is
\begin{align*}
    \mathsf E[X^kZ^a]
    &=\lim_{\theta\downarrow0}
      \frac1{r(\theta)\sqrt\theta}
      \sum_{l=1}^d\rho_l^k
      \int_0^\theta K_{al}(\theta-s)\,\mathrm ds
    \\
    &=\frac1{H+1/2}
      \sum_{l=1}^d\rho_l^k\kappa_{al}.
\end{align*}
Therefore,
\[
    \mathsf E[X^kY^i]
    =\frac1{H+1/2}
      \sum_{a=1}^q\sum_{l=1}^d
      \beta_a^i\rho_l^k\kappa_{al}
    =\frac{2}{2H+1}\gamma^{ki},
\]
which proves \eqref{psi-Gaussian-factor}. Substitution into the
definition of $\psi_A$ in Theorem~\ref{thm1} gives
\eqref{psi-A-Gaussian-factor}, and
\eqref{Gaussian-factor-IV-expansion} follows from
\eqref{basket-IV-expansion}.
\end{proof}

\begin{remark}[The boundary index $H=0$]
\upshape
The abstract Theorem~\ref{thm1} permits $H=0$. The present
Volterra criterion does not. Indeed, if $r$ is slowly varying,
$r(0+)=0$, and
\[
    K(u)\sim\kappa\frac{r(u)}{\sqrt u}
\]
with $\kappa>0$,
then
\[
    \mathsf E[Z_t^2]
    \asymp
    \int_0^t\frac{r(u)^2}{u}\,\mathrm du,
\]
which is generally larger than $r(t)^2$. Thus $Z_t/r(t)$ need
not be tight, and $r(0+)=0$ alone does not verify the
normalized-fluctuation hypothesis of Theorem~\ref{thm1}.
The case $H=0$ therefore requires a different normalization or
additional boundary assumptions and a separate analysis.
\end{remark}

The expansion \eqref{Gaussian-factor-IV-expansion} is uniform
for near-the-money strikes, but uniformity alone does not
justify differentiating it at the money. We therefore extend
the density argument of El Euch et al.~\cite{EFGR} to the basket. Although the basket is not
conditionally lognormal, the nondegeneracy of $C^\perp$ in
\eqref{residual-covariance} gives the uniform Schwartz bounds
that replace Lemma~3.4 of~\cite{EFGR}. The remaining part of
their characteristic-function and Fourier-inversion argument
then applies to the stochastic expansion of the log basket.

Put
\begin{equation}\label{sigma-zero-definition}
    \sigma_{0,\theta}^2
    :=\int_0^\theta\mathsf E[V_t]\,\mathrm dt,
    \qquad
    L_\theta
    :=\frac{\log(S_\theta/S_0)}{\sigma_{0,\theta}},
\end{equation}
and let $H_m$ denote the probabilists' Hermite polynomial of
degree $m$; in particular,
\[
    H_2(x)=x^2-1,
    \qquad
    H_3(x)=x^3-3x.
\]

\begin{theorem}[ATM basket skew and density expansion]
\label{thm:Gaussian-factor-ATM-skew-density}
Suppose that the assumptions of
Theorem~\ref{thm:Gaussian-factor-skew} hold, that $v(0)>0$,
and that $C^\perp$ in \eqref{residual-covariance} is positive
definite. Assume in addition that there exists $\delta>0$ such
that, for every $i$, $\nabla_z^2\log f^i(t,z)$ exists and is
continuous on $[0,\delta]\times\{z\in\mathbb R^q:|z|\leq\delta\}$.
Then $L_\theta$ in
\eqref{sigma-zero-definition} has
a smooth density $p_\theta$. If
\[
    s(\theta):=r(\theta)+\sqrt\theta,
    \qquad
    c_\theta
    :=\frac{\psi_A}{2H+3}r(\theta)
      +\frac{\psi_B}{4}\sqrt\theta,
\]
then, for every integer $m\geq0$,
\begin{equation}\label{basket-density-expansion}
 \sup_{x\in\mathbb R}(1+x^2)^m
 \left|p_\theta(x)-q_\theta(x)\right|
 =o\bigl(s(\theta)\bigr),
\end{equation}
where
\begin{align}
 q_\theta(x)
 &:={}
 \phi\left(x+\frac{\sigma_{0,\theta}}2\right)
 \left[1+c_\theta
 H_3\left(x+\frac{\sigma_{0,\theta}}2\right)
 \right].
 \label{basket-approximating-density}
\end{align}
Consequently,
\begin{equation}\label{basket-digital-expansion}
 \mathsf Q(S_\theta<S_0)
 -\Phi\left(\frac{\sigma_{0,\theta}}2\right)
 =\phi(0)c_\theta+o\bigl(s(\theta)\bigr).
\end{equation}
Equivalently,
\begin{equation}\label{ATM-digital-expansion}
 \mathsf Q(S_\theta<S_0)
 -\Phi\left(
   \frac{\sqrt\theta\,\sigma(0,\theta)}2
 \right)
 =\phi(0)c_\theta+o\bigl(s(\theta)\bigr).
\end{equation}
Moreover, $k\mapsto\sigma(k,\theta)$ is differentiable at
$k=0$ for all sufficiently small $\theta$, and
\begin{align}
 \partial_k\sigma(0,\theta)
 &=\frac{\psi_A}{2H+3}
   \frac{r(\theta)}{\sqrt\theta}
   +\frac{\psi_B}{4}
   +o\left(\frac{r(\theta)}{\sqrt\theta}+1\right).
 \label{Gaussian-factor-skew}
\end{align}
\end{theorem}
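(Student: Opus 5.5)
We plan to follow the characteristic-function and Fourier-inversion strategy of El Euch et al.~\cite{EFGR}, replacing their conditional-lognormal structure by a conditioning on the factor Brownian motions $W^1,\dots,W^d$. The first step is a stochastic expansion of $L_\theta$. We write $\log(S_\theta/S_0)=\int_0^\theta\sqrt{V_t}\,\mathrm dB^\Pi_t-\frac12\int_0^\theta V_t\,\mathrm dt$. Here $\sqrt{V_t}\,\mathrm dB^\Pi_t=\sum_k\Pi^k_t\sqrt{V^k_t}\,\mathrm dB^k_t$. Using Steps~2--4 of the proof of Theorem~\ref{thm1} and the Gaussian delta method from Theorem~\ref{thm:Gaussian-factor-skew}, we aim to show
\[
L_\theta=-\frac{\sigma_{0,\theta}}{2}+G_\theta+s(\theta)\,\Xi_\theta+\text{remainder}.
\]
In this decomposition, $G_\theta=\sigma_{0,\theta}^{-1}\int_0^\theta\sqrt{\mathsf E V_t}\,\mathrm dB^{\Pi_0}_t$ is asymptotically standard normal. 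The term $\Xi_\theta$ is an iterated stochastic integral of second order that combines the variance fluctuation $r\eta_A+\sqrt\theta\,\eta_B$ with the return. The remainder is $o(s(\theta))$ in every $L^p$, which is where the second-derivative hypothesis on $\log f^i$ near the origin and the exponential growth bounds \eqref{F-exponential-growth} are used.

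The second step computes the characteristic function $\varphi_\theta(u)=\mathsf E[e^{iuL_\theta}]$ to first order. By It\^o/Malliavin duality for the second-order chaos term, we expect $\mathsf E[e^{iuG_\theta}s(\theta)\Xi_\theta]\approx (iu)^3 e^{-u^2/2}\,c_\theta$. The coefficient should be identified as $c_\theta$ via the covariances $\mathsf E[\xi\eta_A]=\sqrt{v(0)}\psi_A$ and $\mathsf E[\xi\eta_B]=\sqrt{v(0)}\psi_B$ computed in Step~5 of Theorem~\ref{thm1}. The integration factors should match Theorem~\ref{thm1}: $1/(2H+3)$ from $\int_0^1 t^{H+1/2}\,\mathrm dt$-type weights, and $1/4$ at $H=1/2$. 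Since Theorem~\ref{thm1} already fixes these constants through the implied-volatility expansion, we may instead check consistency rather than recompute them. Inverting $e^{-u^2/2}(1+c_\theta(iu)^3)$, shifted by $\sigma_{0,\theta}/2$, gives $q_\theta$.

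The main obstacle is the third step: uniform integrability of $\varphi_\theta$ and its derivatives, $|u|^N|\partial_u^m\varphi_\theta(u)|\in L^1$ uniformly in small $\theta$. This is needed both for a smooth density and for the weighted sup bound \eqref{basket-density-expansion}. This is where positive definiteness of $C^\perp$ replaces Lemma~3.4 of \cite{EFGR}. Conditionally on $\mathscr F^W_\theta:=\sigma(W^1,\dots,W^d)$, the $V^i$ are fixed, but the weights $\Pi^k_t$ still depend on the residual motions $W^{d+1},\dots,W^{d+n}$, so the log basket is not conditionally Gaussian. To handle this, we will use Malliavin calculus in the residual directions only. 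The residual Malliavin covariance of $L_\theta$ is bounded below by $\sigma_{0,\theta}^{-2}\int_0^\theta\lambda_{\min}(C^\perp)\sum_k(\Pi^k_t)^2V^k_t\,\mathrm dt$, up to small corrections. Negative moments of this quantity follow from $1/f^i\le Ce^{c|z|}$ and $\sum_k(\Pi^k)^2\ge 1/n$. Iterated integration by parts then yields $|\partial_u^m\varphi_\theta(u)|\le C_{N,m}(1+|u|)^{-N}$ uniformly in $\theta$, and similar bounds hold for the remainder's contribution. Combined with the expansion on $|u|\le s(\theta)^{-\epsilon}$, Fourier inversion gives \eqref{basket-density-expansion}. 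Multiplying by $x^{2m}$ corresponds to differentiating in $u$, which is why the derivatives are bounded as well.

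Finally, \eqref{basket-digital-expansion} follows by integrating $q_\theta$ over $x<0$. This uses $\int_{-\infty}^{a}\phi H_3=-\phi(a)H_2(a)$ and $a=\sigma_{0,\theta}/2\to0$, together with the tail integrability provided by the $(1+x^2)^m$ weight. To obtain \eqref{ATM-digital-expansion}, note that $\sqrt\theta\,\sigma(0,\theta)=\sigma_{0,\theta}+o(\sqrt\theta\,s(\theta))$ by Theorem~\ref{thm:Gaussian-factor-skew} and \eqref{EV-two-scale}; since $\phi$ is bounded, replacing the argument costs $o(s(\theta))$. For the skew, differentiability of the put price in $k$ holds because $S_\theta$ has a continuous density, and $\partial_kP=S_0e^k\mathsf Q(S_\theta<S_0e^k)$. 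The implicit function theorem applied to $p(k,w)$ with $\partial_wp>0$ then gives differentiability of $\sigma(\cdot,\theta)$ at $0$, with
\[
\partial_k p+\partial_w p\,\sqrt\theta\,\partial_k\sigma=\mathsf Q(S_\theta<S_0).
\]
At $k=0$ we have $\partial_kp=\Phi(w/2)$ and $\partial_wp=\phi(w/2)\approx\phi(0)$. Hence $\sqrt\theta\,\partial_k\sigma(0,\theta)=c_\theta+o(s(\theta))$, and dividing by $\sqrt\theta$ yields \eqref{Gaussian-factor-skew}.
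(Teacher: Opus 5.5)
Your architecture matches the paper's. You condition on the factor Brownian motions, use positive definiteness of $C^\perp$ to get uniform Fourier bounds replacing Lemma~3.4 of~\cite{EFGR}, write $L_\theta$ as a standard normal plus a second-chaos correction, identify the $H_3$ coefficient by comparison with Theorem~\ref{thm1}, and conclude with the digital identity as in your last paragraph. The step that does not close as written is the size of the remainder. You claim only $o(s(\theta))$ in $L^p$, and the tools you cite for it (Steps~2--4 of Theorem~\ref{thm1} and the delta method of Theorem~\ref{thm:Gaussian-factor-skew}) indeed give only $o_{L^1}$ control. That is too weak for the Fourier step. On $|u|\le U_\theta$ the characteristic-function error is at most $|u|\,\|R_\theta\|_{L^1}+Cu^2s(\theta)^2$, which integrates to $U_\theta^2\|R_\theta\|_{L^1}+CU_\theta^3s(\theta)^2$. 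The tail estimate $C_NU_\theta^{-N}=o(s(\theta))$ forces $U_\theta$ to grow at least like a power of $1/s(\theta)$. If $\|R_\theta\|_{L^1}=o(s(\theta))$ with no rate, then $U_\theta^2\|R_\theta\|_{L^1}$ need not be $o(s(\theta))$. You need a polynomial gain, and the paper proves $\|R_\theta\|_{L^p}=O(s(\theta)^2)$.

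Getting that gain requires a more precise decomposition than your $G_\theta$ and $\Xi_\theta$. Since $f^i$ is only continuous in $t$, neither $m^i(t)=\mathsf E[V_t^i]\to V_0^i$ nor $\boldsymbol\beta^i(t)\to\boldsymbol\beta^i(0)$ comes with a rate. A leading term driven by a fixed direction such as $B^{\Pi_0}$, or a correction built from $\eta_A,\eta_B$ with time-zero coefficients, therefore leaves a rate-less $o(1)$ error. If instead $B^{\Pi_0}$ uses the random $\sqrt{V_t^k}$, then $G_\theta$ is not in the first Wiener chaos and your duality step is unavailable. The paper uses the local $C^2$ hypothesis to expand $\sqrt{V_t^i}=\sqrt{m^i(t)}\{1+\frac12\boldsymbol\beta^i(t)\cdot Z_t\}+O_{L^p}(r(t)^2)$ with coefficients kept at $(t,0)$. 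It takes $M_\theta^{(0)}$ exactly $N(0,1)$ in the first chaos, with the deterministic time-dependent integrand $h_l(t)$, and $D_\theta$ a centered second-chaos element. The projection identity $\mathsf E[D_\theta\mid M_\theta^{(0)}]=b_\theta H_2(M_\theta^{(0)})$ then produces the $H_3$ term exactly. Note that $\mathsf E[e^{iuM}D]=b(iu)^2e^{-u^2/2}$; the $(iu)^3$ appears only after the extra factor $iu$.

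Finally, path-space Malliavin calculus is more than you need for the Fourier bounds. Given the factor paths, $\ell_\theta=(\log(S_\theta^i/S_0^i))_i$ is exactly Gaussian with covariance $\bigl(C^\perp_{ij}\int_0^\theta\sqrt{V_t^iV_t^j}\,\mathrm dt\bigr)_{ij}$, and $\log(S_\theta/S_0)=G(\ell_\theta)$ with $G(x)=\log\sum_i\Pi_0^ie^{x_i}$. So the residual derivative involves the terminal weights $\Pi_\theta^k$ exactly, with no ``small corrections'' to control. Finite-dimensional conditional Gaussian integration by parts with $|\nabla G|^2\ge1/n$ then suffices.
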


\begin{proof}
We divide the proof into three steps.

\medskip
\noindent
\textit{Step 1: Smoothness and uniform Fourier estimates.}

Let
\[
    \mathscr G_\theta
    :=
    \sigma\bigl(
        W_s^l:0\leq s\leq\theta,\ 1\leq l\leq d
    \bigr)\vee\mathscr N,
\]
where $\mathscr N$ denotes the null sets. Conditional on
$\mathscr G_\theta$, the vector
\[
    \ell_\theta
    :=
    \left(
        \log\frac{S_\theta^1}{S_0^1},
        \ldots,
        \log\frac{S_\theta^n}{S_0^n}
    \right)
\]
is Gaussian. Its conditional covariance matrix is
\begin{equation*}
    \Gamma_\theta
    =
    \int_0^\theta
    D_t C^\perp D_t\,\mathrm dt,
    \qquad
    D_t
    :=
    \operatorname{diag}
    \left(
        \sqrt{V_t^1},\ldots,\sqrt{V_t^n}
    \right).
\end{equation*}
Since $C^\perp$ is positive definite,
\begin{equation}\label{conditional-covariance-lower-bound}
    \Gamma_\theta
    \geq
    \lambda_{\min}(C^\perp)
    \left(
        \int_0^\theta
        \min_{1\leq i\leq n}V_t^i\,\mathrm dt
    \right)I_n.
\end{equation}

Moreover,
\begin{equation*}
    \log\frac{S_\theta}{S_0}
    =
    G(\ell_\theta),
    \qquad
    G(x)
    :=
    \log\left(
        \sum_{i=1}^n\Pi_0^i e^{x_i}
    \right).
\end{equation*}
The gradient of $G$ belongs to the unit simplex. In particular,
\[
    |\nabla G(x)|^2\geq\frac1n,
    \qquad x\in\mathbb R^n,
\]
and $G$ has no critical point. All derivatives of $G$ of
positive order are bounded.

The lower bound \eqref{conditional-covariance-lower-bound},
the reciprocal estimate in \eqref{F-exponential-growth}, and
the Gaussian exponential moment bounds permit repeated
conditional Gaussian integration by parts, with bounds uniform
for all sufficiently small $\theta$. Consequently,
$L_\theta$ has a smooth density $p_\theta$.
Repeated conditional
integration by parts also gives, for every pair of
nonnegative integers $a$ and $j$,
\begin{equation}\label{basket-Schwartz-Fourier-bound}
    \sup_{0<\theta<\theta_0}
    \int_{\mathbb R}
    |u|^j
    \left|
        \mathsf E\left[
            L_\theta^a e^{iuL_\theta}
        \right]
    \right|
    \mathrm du
    <\infty.
\end{equation}
This is the basket counterpart of the Fourier estimate in
Lemma~3.4 of~\cite{EFGR}. The conditional lognormality used
there is replaced here by the conditionally Gaussian vector
$\ell_\theta$, the nondegeneracy
\eqref{conditional-covariance-lower-bound}, and the fact that $G$ has no critical point.

\medskip
\noindent
\textit{Step 2: Stochastic and density expansions.}

For $i=1,\dots,n$, put
\[
    \boldsymbol\beta^i(t):=\nabla_z\log f^i(t,0).
\]
The local $C^2$ assumption, \eqref{F-exponential-growth}, and
the kernel asymptotics give, for every $p<\infty$,
$\|Z_t\|_{L^p}=O(r(t))$. Hence Taylor's theorem and Gaussian
tail estimates imply
\[
    \sqrt{V_t^i}
    =\sqrt{m^i(t)}
     \left\{1+\frac12\boldsymbol\beta^i(t)\cdot Z_t\right\}
     +O_{L^p}\bigl(r(t)^2\bigr)
\]
and
\[
    V_t^i
    =m^i(t)\left\{1+\boldsymbol\beta^i(t)\cdot Z_t\right\}
     +O_{L^p}\bigl(r(t)^2\bigr).
\]
Indeed, Taylor's theorem gives these estimates on a fixed
neighborhood of the origin, while the complement is negligible
to every polynomial order by the Gaussian tail estimate and
\eqref{F-exponential-growth}. Notice that the coefficients are
retained at $(t,0)$; no rate of convergence of
$\boldsymbol\beta^i(t)$ to $\boldsymbol\beta^i(0)$ is needed.

Put
\[
    v_{\mathrm{fr}}(t)
    :=
    \sum_{i,j=1}^n
    \Pi_0^i\Pi_0^j\rho^{ij}
    \sqrt{m^i(t)m^j(t)}.
\]
Since $v_{\mathrm{fr}}(t)\to v(0)>0$, it is positive for all
sufficiently small $t$. For $l=1,\dots,d+n$, define the
deterministic integrand
\begin{equation}\label{leading-Gaussian-integrand}
    h_l(t)
    :=
    \sqrt{\frac{\mathsf E[V_t]}{v_{\mathrm{fr}}(t)}}
    \sum_{i=1}^n
    \Pi_0^i\sqrt{m^i(t)}\rho_l^i.
\end{equation}
Then
\[
    \sum_{l=1}^{d+n}h_l(t)^2
    =\mathsf E[V_t],
\]
and hence
\begin{equation}\label{leading-standard-normal}
    M_\theta^{(0)}
    :=
    \frac1{\sigma_{0,\theta}}
    \sum_{l=1}^{d+n}
    \int_0^\theta h_l(t)\,\mathrm dW_t^l
\end{equation}
is standard normal for every sufficiently small $\theta$.
The same product and weight estimates as in the proof of
Theorem~\ref{thm1}, now using the preceding second-order
expansions, give
\[
    \mathsf E[V_t]
    =v_{\mathrm{fr}}(t)
     +O\bigl(r(t)^2+r(t)\sqrt t+t\bigr)
    =v_{\mathrm{fr}}(t)+O\bigl(s(t)^2\bigr).
\]
Consequently, the square-root factor in
\eqref{leading-Gaussian-integrand} equals $1+O(s(t)^2)$.
Replacing the frozen-weight Gaussian integral by
\eqref{leading-standard-normal} therefore changes the
normalized log return by $O_{L^p}(s(\theta)^2)$ for every
$p<\infty$; this difference is included in the remainder
below.

The stochastic expansions established in the proof of
Theorem~\ref{thm1}, together with the second-order Taylor
expansion
\[
    G(x)
    =
    \Pi_0^\top x
    +
    \frac12x^\top
    \left\{
        \operatorname{diag}(\Pi_0)
        -
        \Pi_0\Pi_0^\top
    \right\}x
    +
    O(|x|^3),
\]
give
\begin{equation}\label{two-scale-log-return-expansion}
    L_\theta
    =
    M_\theta^{(0)}
    -
    \frac{\sigma_{0,\theta}}2
    +
    D_\theta
    +
    R_\theta,
\end{equation}
where $D_\theta$ is obtained by retaining the terms linear in
$Z_t$ in the volatility integrands, with the deterministic
coefficients $\boldsymbol\beta^i(t)$, and the centered quadratic
term in the expansion of $G$. Thus $D_\theta$ is a centered
element of the second Wiener chaos and, for every $p<\infty$,
\begin{equation}\label{log-return-remainder-estimates}
    \|D_\theta\|_{L^p}
    =
    O\bigl(s(\theta)\bigr),
    \qquad
    \|R_\theta\|_{L^p}
    =
    O\bigl(s(\theta)^2\bigr).
\end{equation}

We use the following elementary projection identity. If
$M=I_1(h)$ is standard normal and $D=I_2(f)$ belongs to the
second Wiener chaos of the same isonormal Gaussian process,
then
\begin{equation}\label{second-chaos-projection-identity}
    \mathsf E[D\mid M]
    =
    \frac12\mathsf E[D H_2(M)]H_2(M).
\end{equation}
Indeed, the conditional expectation is the orthogonal
projection of $D$ onto the second Wiener chaos generated by
$M$, which is spanned by $H_2(M)$, and
\[
    \mathsf E[H_2(M)^2]=2.
\]

Set
\begin{equation*}
    b_\theta
    :=
    \frac12
    \mathsf E\left[
        D_\theta H_2(M_\theta^{(0)})
    \right].
\end{equation*}
By \eqref{log-return-remainder-estimates},
$b_\theta=O(s(\theta))$, and
\eqref{second-chaos-projection-identity} gives
\begin{equation}\label{conditional-perturbation-projection}
    \mathsf E\left[
        D_\theta\mid M_\theta^{(0)}
    \right]
    =
    b_\theta H_2(M_\theta^{(0)}).
\end{equation}

Put
\[
    a_\theta:=\frac{\sigma_{0,\theta}}2.
\]
Define
\[
    \bar q_\theta(x)
    :=
    \phi(x+a_\theta)
    \left\{
        1+b_\theta H_3(x+a_\theta)
    \right\}.
\]
We now follow the Fourier argument of El Euch et
al.~\cite{EFGR}. For every nonnegative integer $a$, Taylor's
formula, \eqref{log-return-remainder-estimates}, and
\eqref{conditional-perturbation-projection} give, uniformly in
$u\in\mathbb R$,
\[
 \left|
  \mathsf E[L_\theta^a e^{iuL_\theta}]
  -\int_{\mathbb R}x^a e^{iux}\bar q_\theta(x)\,\mathrm dx
 \right|
 \leq C_a(1+|u|)^2s(\theta)^2.
\]
Indeed, the first-order term is identified by integration by
parts using
\[
    \frac{\mathrm d}{\mathrm dx}
    \{H_2(x)\phi(x)\}=-H_3(x)\phi(x),
\]
and all second-order terms are $O((1+|u|)^2s(\theta)^2)$.
Choose $U_\theta=s(\theta)^{-\varepsilon}$ with
$0<\varepsilon<1/3$. Integration over $|u|\leq U_\theta$ gives
$O(s(\theta)^{2-3\varepsilon})=o(s(\theta))$. For
$|u|>U_\theta$, \eqref{basket-Schwartz-Fourier-bound}, together
with the corresponding immediate bound for $\bar q_\theta$,
gives $o(s(\theta))$ by taking a sufficiently high power of
$|u|$. Fourier inversion, applied with
$a=0,2,\dots,2m$, therefore gives, for every integer $m\geq0$,
\begin{equation}\label{density-expansion-with-b}
    \sup_{x\in\mathbb R}(1+x^2)^m
    \left|
        p_\theta(x)-\bar q_\theta(x)
    \right|
    =
    o\bigl(s(\theta)\bigr).
\end{equation}
It remains only to identify $b_\theta$. The standard
Black--Scholes inversion of
\eqref{density-expansion-with-b}, carried out exactly as in
the proof of Theorem~2.3 of~\cite{EFGR}, gives
\begin{equation*}
    \sigma(\sqrt\theta z,\theta)
    =
    \sqrt{\widehat v(\theta)}
    +
    b_\theta z
    +
    o\bigl(s(\theta)\bigr),
\end{equation*}
uniformly in $z$ on compact sets. On the other hand,
Theorem~\ref{thm1} and
\eqref{deterministic-level-replacement} give
\[
    \sigma(\sqrt\theta z,\theta)
    =
    \sqrt{\widehat v(\theta)}
    +
    c_\theta z
    +
    o\bigl(s(\theta)\bigr),
\]
where
\[
    c_\theta
    =
    \frac{\psi_A}{2H+3}r(\theta)
    +
    \frac{\psi_B}{4}\sqrt\theta.
\]
Comparison at any fixed $z\neq0$ gives
\begin{equation}\label{Hermite-coefficient-identification}
    b_\theta
    =
    c_\theta+o\bigl(s(\theta)\bigr).
\end{equation}
Substituting
\eqref{Hermite-coefficient-identification} into
\eqref{density-expansion-with-b} proves
\eqref{basket-density-expansion}.

\medskip
\noindent
\textit{Step 3: Digital probability and ATM skew.}

Integrating \eqref{basket-density-expansion} over
$(-\infty,0)$ gives
\begin{align*}
    \mathsf Q(S_\theta<S_0)
    &=
    \Phi(a_\theta)
    +
    c_\theta
    \int_{-\infty}^{a_\theta}
        H_3(y)\phi(y)\,\mathrm dy
    +
    o\bigl(s(\theta)\bigr)\\
    &=
    \Phi(a_\theta)
    +
    c_\theta
    (1-a_\theta^2)\phi(a_\theta)
    +
    o\bigl(s(\theta)\bigr)\\
    &=
    \Phi(a_\theta)
    +
    \phi(0)c_\theta
    +
    o\bigl(s(\theta)\bigr),
\end{align*}
because $a_\theta=O(\sqrt\theta)$ and
$c_\theta=O(s(\theta))$. This proves
\eqref{basket-digital-expansion}.

Furthermore,
\[
    \frac{\sigma_{0,\theta}}{\sqrt\theta}
    =
    \sqrt{\widehat v(\theta)}.
\]
By \eqref{deterministic-level-replacement} and
\eqref{basket-IV-expansion} at $z=0$,
\[
    \sqrt\theta\,\sigma(0,\theta)
    -
    \sigma_{0,\theta}
    =
    o\bigl(\sqrt\theta\,s(\theta)\bigr).
\]
Consequently, $\Phi(a_\theta)$ in
\eqref{basket-digital-expansion} may be replaced by
\[
    \Phi\left(
        \frac{\sqrt\theta\,\sigma(0,\theta)}2
    \right),
\]
which proves \eqref{ATM-digital-expansion}.

The smooth density established in Step~1 implies that the put
price is differentiable in log-strike. Since the
Black--Scholes vega is strictly positive, the implicit-function
theorem shows that $k\mapsto\sigma(k,\theta)$ is differentiable
near $k=0$. Differentiating the implied-volatility pricing
identity at the money gives the exact relation
\begin{align}
    \mathsf Q(S_\theta<S_0)
    -
    \Phi\left(
        \frac{\sqrt\theta\,\sigma(0,\theta)}2
    \right)
    &=
    \phi\left(
        \frac{\sqrt\theta\,\sigma(0,\theta)}2
    \right)
    \sqrt\theta\,\partial_k\sigma(0,\theta).
    \label{digital-skew-identity}
\end{align}
Combining this identity with
\eqref{ATM-digital-expansion} and using
$\sqrt\theta\,\sigma(0,\theta)\to0$, we obtain
\[
    \partial_k\sigma(0,\theta)
    =
    \frac{c_\theta}{\sqrt\theta}
    +
    o\left(
        \frac{s(\theta)}{\sqrt\theta}
    \right).
\]
Therefore,
\[
    \partial_k\sigma(0,\theta)
    =
    \frac{\psi_A}{2H+3}
    \frac{r(\theta)}{\sqrt\theta}
    +
    \frac{\psi_B}{4}
    +
    o\left(
        \frac{r(\theta)}{\sqrt\theta}+1
    \right).
\]
This proves \eqref{Gaussian-factor-skew}.
\end{proof}

\section{Skew stickiness ratio}
\label{sec:stickiness}

We now study the short-maturity skew stickiness ratio (SSR) for the
basket under the same Gaussian factor model as in the previous section.
For the proof, it is convenient to use the alternative
total-implied-variance representation in~\cite{F26}.

Fix a maturity $T \in (0,t_0)$. For $t<T$ and $K>0$, let
\[
    P_t(K)
    :=
    \mathsf E\bigl[(K-S_T)_+\mid\mathscr F_t\bigr]
\]
and define the total implied variance $\Sigma_t(K)$ by
\[
    P_t(K)
    =
    p\bigl(S_t,K,\Sigma_t(K)\bigr),
\]
where $p(s,K,w)$ denotes the Black--Scholes put price with spot
$s$, strike $K$, and total variance $w$. Put
\[
    \Sigma_t^S:=\Sigma_t(S_t)
\]
and write
\[
    \Sigma_t'(K):=\partial_K\Sigma_t(K).
\]
Equivalently,
\[
    \Sigma_t(K)
    =
    (T-t)\,
    \sigma_t\!\left(\log\frac{K}{S_t},T-t\right)^2,
\]
and hence
\[
    \Sigma_t^S
    =
    (T-t)\sigma_t(0,T-t)^2.
\]

The skew stickiness ratio is originally defined by
\begin{equation}\label{SSR-original-definition}
    \operatorname{SSR}_t(T)
    :=
    \frac{1}{\partial_k\sigma_t(0,T-t)}
    \frac{
        \mathrm d\langle\sigma^S,\log S\rangle_t
    }{
        \mathrm d\langle\log S\rangle_t
    },
    \qquad
    \sigma_t^S:=\sigma_t(0,T-t),
\end{equation}
whenever the denominator is nonzero.

The following equivalent representation is more convenient for
our purposes.

\begin{lemma}[Total-variance representation]
\label{lem:Fukasawa-total-variance}
Suppose that $\sigma^S$, $\Sigma^S$, and $S$ are continuous
semimartingales and that $\Sigma_t(K)$ is differentiable with
respect to $K$ at $K=S_t$. Then, whenever
$\Sigma_t'(S_t)\neq0$,
\begin{equation}\label{SSR-total-variance-representation}
    \operatorname{SSR}_t(T)
    =
    \frac{1}{\Sigma_t'(S_t)}
    \frac{
        \mathrm d\langle\Sigma^S,S\rangle_t
    }{
        \mathrm d\langle S\rangle_t
    }.
\end{equation}
\end{lemma}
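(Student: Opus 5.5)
The plan is to express both quantities in \eqref{SSR-original-definition}, the numerator covariation and the denominator skew, through $\Sigma$ and $S$, and then show that the conversion factors cancel. Write $\tau:=T-t$. By definition $\Sigma_t(K)=\tau\,\sigma_t(\log(K/S_t),\tau)^2$.

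\textbf{Denominator.} Differentiate in $K$ at $K=S_t$. Since $\partial_K\log(K/S_t)=1/K$, this gives
\[
\Sigma_t'(S_t)=\frac{2\tau\,\sigma_t^S}{S_t}\,\partial_k\sigma_t(0,\tau),
\]
so that
\[
\partial_k\sigma_t(0,\tau)=\frac{S_t\,\Sigma_t'(S_t)}{2\tau\sigma_t^S}.
\]
Differentiability of $\Sigma_t$ at $S_t$ is equivalent to differentiability of $\sigma_t(\cdot,\tau)$ at $0$, because $\sigma_t^S>0$ and the map $K\mapsto\log(K/S_t)$ is a diffeomorphism. In particular, $\Sigma_t'(S_t)\neq0$ holds exactly when the denominator of \eqref{SSR-original-definition} is nonzero.

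\textbf{Numerator.} Use $\Sigma_t^S=\tau(\sigma_t^S)^2$. The factor $\tau$ is deterministic and of finite variation, so Itô's formula gives
\[
\mathrm d\Sigma^S_t=2\tau\sigma_t^S\,\mathrm d\sigma_t^S+(\text{finite variation}).
\]
Likewise $\mathrm d\log S_t=S_t^{-1}\,\mathrm dS_t+(\text{finite variation})$. Finite-variation terms do not contribute to brackets of continuous semimartingales, so
\[
\mathrm d\langle\Sigma^S,S\rangle_t=2\tau\sigma^S_tS_t\,\mathrm d\langle\sigma^S,\log S\rangle_t,
\qquad
\mathrm d\langle S\rangle_t=S_t^2\,\mathrm d\langle\log S\rangle_t.
\]
Both $\log S$ and $\sigma^S=\sqrt{\Sigma^S/\tau}$ are continuous semimartingales: $S>0$, and $\Sigma^S>0$ for $t<T$ since $S$ is nondegenerate. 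This justifies applying Itô's formula to them.

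\textbf{Combining.} The ratio of Radon–Nikodym derivatives is
\[
\frac{\mathrm d\langle\Sigma^S,S\rangle_t}{\mathrm d\langle S\rangle_t}
=\frac{2\tau\sigma^S_t}{S_t}\,\frac{\mathrm d\langle\sigma^S,\log S\rangle_t}{\mathrm d\langle\log S\rangle_t}.
\]
Dividing by $\Sigma_t'(S_t)=2\tau\sigma^S_t\,\partial_k\sigma_t(0,\tau)/S_t$ cancels the factor $2\tau\sigma^S_t/S_t$ exactly and yields \eqref{SSR-total-variance-representation}.

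The argument is algebraic. The only delicate point is the interpretation of the Radon–Nikodym derivatives. Here $\mathrm d\langle\log S\rangle_t=V_t\,\mathrm dt$ with $V_t>0$, because $C^\perp$ is nondegenerate or, more generally, because $V$ is continuous and positive. The ratios are therefore well defined $\mathrm dt\otimes\mathsf Q$-a.e., and the identities hold in that sense.
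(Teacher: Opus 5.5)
Your proof is correct and follows essentially the same route as the paper. You differentiate $\Sigma_t(K)=(T-t)\,\sigma_t(\log(K/S_t),T-t)^2$ at $K=S_t$ to get $\Sigma_t'(S_t)=2(T-t)\sigma_t^S\partial_k\sigma_t(0,T-t)/S_t$, convert the brackets via $\Sigma^S=(T-t)(\sigma^S)^2$ and $\mathrm d\langle\sigma^S,\log S\rangle_t=S_t^{-1}\mathrm d\langle\sigma^S,S\rangle_t$, and cancel the common factor $2(T-t)\sigma_t^S/S_t$; your closing remarks on positivity of $V_t$ and $\Sigma_t^S$ go beyond the lemma's hypotheses and are not needed for the identity itself.
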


\begin{proof}
Differentiating
\[
    \Sigma_t(K)
    =
    (T-t)\,
    \sigma_t\!\left(\log\frac{K}{S_t},T-t\right)^2
\]
with respect to $K$ and evaluating at $K=S_t$ gives
\begin{equation*}
    \Sigma_t'(S_t)
    =
    \frac{2(T-t)\sigma_t^S}{S_t}
    \partial_k\sigma_t(0,T-t).
\end{equation*}
Moreover, since
\[
    \Sigma_t^S=(T-t)(\sigma_t^S)^2,
\]
we have
\[
    \frac{
        \mathrm d\langle\Sigma^S,S\rangle_t
    }{
        \mathrm d\langle S\rangle_t
    }
    =
    2(T-t)\sigma_t^S
    \frac{
        \mathrm d\langle\sigma^S,S\rangle_t
    }{
        \mathrm d\langle S\rangle_t
    }.
\]
On the other hand,
\[
    \frac{
        \mathrm d\langle\sigma^S,\log S\rangle_t
    }{
        \mathrm d\langle\log S\rangle_t
    }
    =
    S_t
    \frac{
        \mathrm d\langle\sigma^S,S\rangle_t
    }{
        \mathrm d\langle S\rangle_t
    }.
\]
Combining these identities proves
\eqref{SSR-total-variance-representation}.
\end{proof}

We next verify explicitly the Malliavin differentiability
required by Theorem~1 of~\cite{F26}. Let
$\mathcal D^l=\{\mathcal D_u^l\}_{u\geq0}$ denote the
Malliavin--Shigekawa derivative with respect to $W^l$. On a
smooth cylindrical random variable
\[
    G
    =f\bigl(W_{t_1},\ldots,W_{t_m}\bigr),
    \qquad f\in C_b^\infty(\mathbb R^{m(d+n)}),
\]
it is defined by
\[
    \mathcal D_u^lG
    =
    \sum_{r=1}^m
    \partial_{(r,l)}f
       \bigl(W_{t_1},\ldots,W_{t_m}\bigr)
    1_{[0,t_r]}(u).
\]
This densely defined operator is closable as an operator from
$L^2(\Omega)$ to
$L^2(\Omega\times\mathbb R_+;\mathbb R^{d+n})$. We denote its
closure by
\[
    \mathcal D=(\mathcal D^1,\ldots,\mathcal D^{d+n})
\]
and its domain by $\mathbb D^{1,2}$.

For notational convenience, set
\[
    K_{al}=0,
    \qquad l=d+1,\dots,d+n.
\]
Since the factors are Gaussian Volterra processes,
\[
    \mathcal D_u^lZ_s^a
    =K_{al}(s-u)1_{\{u <  s\}}.
\]
Consequently, the Malliavin chain rule gives
\begin{equation}\label{Malliavin-log-V}
    \mathcal D_t^l\log V_s^j
    =\Lambda_l^j(s,t)1_{\{t<  s\}},
\end{equation}
where
\begin{equation*}
    \Lambda_l^j(s,t)
    :=
    \sum_{a=1}^q
    \partial_{z_a}\log f^j(s,Z_s)K_{al}(s-t),
    \qquad t <  s.
\end{equation*}
In particular,
\[
    \mathcal D_t^l\sqrt{V_s^j}
    =\frac12\sqrt{V_s^j}\Lambda_l^j(s,t)1_{\{t < s\}},
    \qquad
    \mathcal D_t^lV_s^j
    =V_s^j\Lambda_l^j(s,t)1_{\{t <  s\}}.
\]

For each constituent,
\begin{align}
    \mathcal D_t^l\log S_T^j
    &=\rho_l^j\sqrt{V_t^j}
      +\frac12\int_t^T
        \sqrt{V_s^j}\Lambda_l^j(s,t)\,\mathrm dB_s^j
    -\frac12\int_t^T
        V_s^j\Lambda_l^j(s,t)\,\mathrm ds.
    \label{explicit-D-log-Sj}
\end{align}
The Gaussian exponential moment bounds, together with
\eqref{F-exponential-growth} and
\eqref{kernel-local-L2-bound}, imply
\[
    \mathsf E\left[
      \sum_{l=1}^{d+n}\int_0^T
      \left|\mathcal D_t^l\log S_T^j\right|^2\,\mathrm dt
    \right]<\infty.
\]
Thus $\log S_T^j\in\mathbb D^{1,2}$. Since
\begin{equation}\label{explicit-D-log-basket}
    \mathcal D_t^l\log S_T
    =\sum_{j=1}^n\Pi_T^j
      \mathcal D_t^l\log S_T^j
\end{equation}
and $0\leq\Pi_T^j\leq1$, it follows that
$\log S_T\in\mathbb D^{1,2}$.

\begin{equation*}
    \Lambda^{ij}(s,t)
    :=
    \sum_{l=1}^d\rho_l^i\Lambda_l^j(s,t).
\end{equation*}
It follows from \eqref{kernel-short-time-assumption}, the
continuity of $\nabla_z\log f^j$, and
\eqref{F-exponential-growth} that, for every $p<\infty$,
\begin{equation}\label{Lambda-short-time-limit}
    \frac{\sqrt s}{r(s)}\Lambda^{ij}(s,0)
    \longrightarrow
    \gamma^{ij}
    :=\sum_{a=1}^q\sum_{l=1}^d
      \rho_l^i\beta_a^j\kappa_{al}
\end{equation}
in $L^p$ as $s\downarrow0$.

\begin{lemma}[Option-surface regularity]
\label{lem:option-surface-regularity}
Suppose that $C^\perp$ in \eqref{residual-covariance} is
positive definite. Then, for every $0\leq t<T$, the conditional
law of $S_T$ given $\mathscr F_t$ has a continuous density on
$(0,\infty)$. In particular, $K\mapsto P_t(K)$ is twice
continuously differentiable.

More precisely, if $T_0<T$ and $I\Subset(0,\infty)$, the fields
$P_t(K)$ and $\partial_KP_t(K)$ admit martingale
representations
\begin{align*}
    P_t(K)&=P_0(K)+\sum_{l=1}^{d+n}
      \int_0^t f_s^l(K)\,\mathrm dW_s^l,\\
    \partial_KP_t(K)&=\partial_KP_0(K)+
      \sum_{l=1}^{d+n}\int_0^t g_s^l(K)\,\mathrm dW_s^l,
\end{align*}
whose integrands have versions continuous in $K\in I$ and
satisfy
\[
    \mathsf E\left[\int_0^{T_0}
      \sup_{K\in I}\sum_{l=1}^{d+n}
      \bigl(|f_s^l(K)|^2+|g_s^l(K)|^2\bigr)\,\mathrm ds
    \right]<\infty.
\]
Consequently, the option-price field satisfies the regularity
hypotheses of Theorem~1 of~\cite{F26}; the
implied total-variance field $\Sigma_t(K)$ and its ATM
evaluation $\Sigma_t^S$ have the semimartingale and
strike-regularity properties used in
Lemma~\ref{lem:Fukasawa-total-variance}.
\end{lemma}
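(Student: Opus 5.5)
Conditioning on $\mathscr G_T\vee\mathscr F_t$ will give both the density and the regularity. Here
\[
\mathscr G_T:=\sigma(W^l_s: s\le T,\ l\le d)\vee\mathscr N
\]
is the information carried by the volatility factors. Given $\mathscr G_T\vee\mathscr F_t$, the vector $\ell_{t,T}:=(\log(S^i_T/S^i_t))_i$ is Gaussian, because the residual Brownian motions $W^{d+1},\dots,W^{d+n}$ on $(t,T]$ are independent of that $\sigma$-field. Its conditional covariance is
\[
\Gamma_{t,T}=\int_t^T D_sC^\perp D_s\,\mathrm ds\ \ge\ \lambda_{\min}(C^\perp)\Bigl(\int_t^T\min_iV^i_s\,\mathrm ds\Bigr)I_n,
\]
exactly as in \eqref{conditional-covariance-lower-bound}, and this is a.s. positive. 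We have $S_T=S_t\exp(G_t(\ell_{t,T}))$ with $G_t(x)=\log\sum_i\Pi^i_te^{x_i}$. This map has gradient in the simplex, so it has no critical point. The conditional law of $\log S_T$ is therefore the pushforward of a nondegenerate Gaussian under a submersion. By the coarea formula, or equivalently by conditional Gaussian integration by parts, it has a smooth conditional density. Its derivatives are bounded by polynomials in $\lambda_{\min}(\Gamma_{t,T})^{-1}$, $|\Gamma_{t,T}|$ and $\Pi$-dependent constants. Averaging over $\mathscr G_T$ gives the continuous density of $S_T$ given $\mathscr F_t$. To justify the averaging I need the moment bound
\[
\mathsf E\Bigl[\Bigl(\int_t^T\min_iV^i_s\,\mathrm ds\Bigr)^{-p}\Bigm|\mathscr F_t\Bigr]<\infty,
\]
which follows from \eqref{F-exponential-growth}, since $1/f^i\le Ce^{c|z|}$, together with Jensen and Gaussian exponential moments of $\sup_s|Z_s|$. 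This uses the continuity of $Z$ and Fernique. Once the density is continuous, $\partial_K^2P_t(K)$ equals the conditional density of $S_T$ at $K$, so $P_t$ is $C^2$ in $K$.

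For the martingale representations, I note that $\partial_KP_t(K)=\mathsf Q(S_T<K\mid\mathscr F_t)$ and $P_t(K)=\mathsf E[(K-S_T)_+\mid\mathscr F_t]$. By Clark--Ocone the integrands are
\[
f^l_s(K)=-\mathsf E[\mathbf 1_{\{S_T<K\}}\mathcal D^l_sS_T\mid\mathscr F_s]
\]
and $g^l_s(K)=\mathsf E[\mathcal D^l_s\mathbf 1_{\{S_T<K\}}\mid\mathscr F_s]$. The first is legitimate because $S_T\in\mathbb D^{1,2}$, by \eqref{explicit-D-log-Sj}, \eqref{explicit-D-log-basket} and the chain rule with moment bounds on $S_T$. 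The indicator is not in $\mathbb D^{1,2}$, so for $g$ I will instead write
\[
g^l_s(K)=\partial_K\mathsf E\bigl[(K-S_T)_+\,\ldots\bigr]
\]
via the identity $\mathcal D^l_s\mathbf 1_{\{S_T<K\}}=-\partial_K\bigl[\mathbf 1_{\{S_T<K\}}\mathcal D^l_sS_T\bigr]$ in the distributional sense. Then I remove the derivative in $K$ by a conditional integration by parts with respect to the residual Gaussian directions $W^{d+1},\dots,W^{d+n}$ on $(s,T]$. Concretely, I use a Malliavin weight $\pi_s$ built from $\Gamma_{s,T}^{-1}$ and $\nabla G_s$, and represent
\[
g^l_s(K)=\mathsf E\bigl[\mathbf 1_{\{S_T<K\}}\,\pi^l_s\mid\mathscr F_s\bigr].
\]
In this form continuity in $K$ is visible, since $S_T$ has no atoms, and
\[
\sup_{K\in I}|g^l_s(K)|\le\mathsf E\bigl[|\pi^l_s|\bigm|\mathscr F_s\bigr].
\]

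The main obstacle is the square-integrability on $[0,T_0]$. The weight $\pi_s$ involves $\Gamma_{s,T}^{-1}$, and $\Gamma_{s,T}$ is of size $T-s$. Because $s\le T_0<T$, this stays bounded below by $(T-T_0)\min V$ up to moment factors, so the restriction to $T_0<T$ is exactly what keeps the inverse moments finite. Combining this with the exponential moment bounds on $\mathcal D_sS_T$, which come from \eqref{kernel-local-L2-bound} and the local square integrability of $K_{al}$ entering $\Lambda$, gives
\[
\mathsf E\int_0^{T_0}\sup_{K\in I}\bigl(|f_s|^2+|g_s|^2\bigr)\,\mathrm ds<\infty.
\]

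Finally, the last assertion follows from these facts. $P_t$ and $\partial_KP_t$ are continuous semimartingale fields, jointly regular in $K$. The Black--Scholes vega is positive, and the map $w\mapsto p(s,K,w)$ is smooth. By the implicit function theorem and Itô--Wentzell, $\Sigma_t(K)$ and $\Sigma^S_t=\Sigma_t(S_t)$ are then continuous semimartingales, and $\Sigma_t(K)$ is differentiable in $K$ at $K=S_t$. These are the hypotheses of Theorem~1 of~\cite{F26} and of Lemma~\ref{lem:Fukasawa-total-variance}.
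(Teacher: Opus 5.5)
Your proposal is correct and follows essentially the paper's route: you condition on the factor paths so that the constituent log-returns are Gaussian with covariance bounded below through $\lambda_{\min}(C^\perp)$, take inverse moments from the reciprocal growth bound and Fernique, use the critical-point-free map $G_t$ for the continuous density, obtain the martingale representation via conditional Gaussian integration by parts in the residual directions (your Clark--Ocone formulas and Malliavin weight make the paper's ``conditional Gaussian differentiation'' explicit, with $T_0<T$ controlling $\Gamma_{s,T}^{-1}$), and finish with the implicit-function theorem and It\^o--Wentzell.

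One point needs fixing. $S_T$ need not have a finite second moment under the paper's hypotheses, so instead of asserting $S_T\in\mathbb D^{1,2}$ and using moment bounds on $\mathcal D_sS_T$, apply Clark--Ocone to the bounded Lipschitz functional $y\mapsto(K-e^y)_+$ of $\log S_T\in\mathbb D^{1,2}$. This gives the same $f^l_s(K)=-\mathsf E[\mathbf 1_{\{S_T<K\}}S_T\,\mathcal D^l_s\log S_T\mid\mathscr F_s]$ with $|f^l_s(K)|\le K\,\mathsf E[|\mathcal D^l_s\log S_T|\mid\mathscr F_s]$. Also note that $S_T$ cancels in your weight, since $\mathcal D_sS_T=S_T\mathcal D_s\log S_T$ while the covering field $\mathcal D^{\mathrm{res}}S_T/\|\mathcal D^{\mathrm{res}}S_T\|^2$ scales like $1/S_T$. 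Hence only moments of $\mathcal D_s\log S_T$ and of $\lambda_{\min}(\Gamma_{s,T})^{-1}$ are needed.
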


\begin{proof}
Condition on $\mathscr F_t$ and on the paths of the factor
Brownian motions $(W^1,\dots,W^d)$ over $[t,T]$. The future
volatility paths are then fixed, while the Brownian motions
$(W^{d+1},\dots,W^{d+n})$ remain independent. The conditional
law of
$(\log S_T^1,\dots,\log S_T^n)$ is Gaussian with covariance
matrix
\[
    \Gamma_{t,T}^{ij}
    =C_{ij}^\perp\int_t^T
      \sqrt{V_s^iV_s^j}\,\mathrm ds.
\]
For $x\in\mathbb R^n\setminus\{0\}$,
\begin{align*}
    x^\top\Gamma_{t,T}x
    &=\int_t^T
      \bigl(\operatorname{diag}(\sqrt{V_s^1},\dots,
      \sqrt{V_s^n})x\bigr)^\top
      C^\perp
      \bigl(\operatorname{diag}(\sqrt{V_s^1},\dots,
      \sqrt{V_s^n})x\bigr)\,\mathrm ds
    >0,
\end{align*}
because $C^\perp$ is positive definite and $V_s^i>0$.
Therefore the conditional vector of constituent log prices has
a smooth density. Moreover, for $T_0<T$,
\[
    \lambda_{\min}(\Gamma_{t,T})
    \geq
    \lambda_{\min}(C^\perp)(T-t)
    \min_{1\leq i\leq n}\inf_{t\leq s\leq T}V_s^i,
    \qquad t\leq T_0.
\]
The reciprocal bound in \eqref{F-exponential-growth}, the
continuity of $Z$, and Fernique's theorem imply that the
right-hand side has inverse moments of every order, locally
uniformly in $t<T$. Hence the conditional Gaussian density and
all the derivatives needed below have integrable bounds,
locally uniformly in $(t,K)\in[0,T)\times(0,\infty)$.

The image of the conditional Gaussian vector under
\[
    (x_1,\dots,x_n)\longmapsto
    \sum_{i=1}^ne^{x_i}
\]
has a continuous density on $(0,\infty)$: the map is a smooth
submersion, and the preceding bounds justify the coarea formula
and integration over the future factor paths. Differentiation
under the conditional expectation therefore gives
\[
    \partial_KP_t(K)
    =\mathsf Q(S_T<K\mid\mathscr F_t),
    \qquad
    \partial_{KK}P_t(K)
    =p_{S_T\mid\mathscr F_t}(K),
\]
with the asserted joint continuity.

The Brownian martingale-representation theorem applied to
$P(K)$ and $\partial_KP(K)$ gives the displayed integrands.
Conditional Gaussian differentiation, together with the
inverse-covariance moment bounds above, gives versions that are
continuous in $K$ and the stated local square-integrability
bound. These are precisely the option-surface regularity
conditions used in~\cite{F26}. Finally, the implicit-function
theorem applies to the Black--Scholes pricing identity because
its derivative with respect to total variance is strictly
positive. The random-field It\^o--Wentzell formula then gives
the asserted semimartingale properties of $\Sigma_t(K)$ and
$\Sigma_t^S$.
\end{proof}

Differentiating the option-pricing identity with respect to
strike gives
\begin{equation}\label{Sigma-prime-probability}
    \Sigma_t'(S_t)
    =
    \frac{2\sqrt{\Sigma_t^S}}
         {S_t\phi(\sqrt{\Sigma_t^S}/2)}
    Y_t^T,
\end{equation}
where
\begin{equation}\label{Y-definition}
    Y_t^T
    :=
    \mathsf Q(S_T<S_t\mid\mathscr F_t)
    -
    \Phi\left(\frac{\sqrt{\Sigma_t^S}}{2}\right).
\end{equation}
Here $\Phi$ and $\phi$ denote the standard normal distribution
function and density, respectively.

Since $\log S_T\in\mathbb D^{1,2}$, the same
Clark--Ocone argument as in Theorem~1 of~\cite{F26}, applied
to the $d+n$ Brownian drivers, gives
\begin{equation}\label{stickiness-Malliavin-representation}
    \frac{
        \mathrm d\langle\Sigma^S,S\rangle_t
    }{
        \mathrm d\langle S\rangle_t
    }
    =
    \frac{2\sqrt{\Sigma_t^S}}
         {S_t\phi(\sqrt{\Sigma_t^S}/2)}
    A_t^T,
\end{equation}
where
\begin{align}
    A_t^T
    &:=
    \mathsf E\left[
      1_{\{S_T<S_t\}}\frac{S_T}{S_t}
      \left(
        1
        -
        \frac1{V_t}
        \sum_{l=1}^{d+n}\sum_{i=1}^n
        \Pi_t^i\sqrt{V_t^i}\rho_l^i
        \mathcal D_t^l\log S_T
      \right)
      \,\middle|\,\mathscr F_t
    \right].
    \label{A-definition}
\end{align}
Indeed,
\[
    \frac{
        \mathrm d\langle W^l,S\rangle_t
    }{
        \mathrm d\langle S\rangle_t
    }
    =
    \frac1{S_tV_t}
    \sum_{i=1}^n
    \Pi_t^i\sqrt{V_t^i}\rho_l^i.
\]

Combining
\eqref{SSR-total-variance-representation},
\eqref{Sigma-prime-probability}, and
\eqref{stickiness-Malliavin-representation} yields the exact
identity
\begin{equation}\label{SSR-A-over-Y}
    \operatorname{SSR}_t(T)
    =
    \frac{A_t^T}{Y_t^T},
\end{equation}
whenever $Y_t^T\neq0$, or equivalently
$\Sigma_t'(S_t)\neq0$. This is the total-variance
representation that we use in the short-maturity analysis.

\begin{theorem}\label{thm:SSR-limit}
Suppose that the assumptions of
Theorem~\ref{thm:Gaussian-factor-skew} hold, that $v(0)>0$,
and that $C^\perp$ in \eqref{residual-covariance} is positive
definite. Assume in addition that there exists $\delta>0$ such
that, for every $i$, $\nabla_z^2\log f^i(t,z)$ exists and is
continuous on $[0,\delta]\times\{z\in\mathbb R^q:|z|\leq\delta\}$.
 Then, as
$\theta\downarrow0$,
\begin{equation}
    A_0^\theta
    =
    \frac{\phi(0)}2
    \left(
        \psi_A r(\theta)+\psi_B\sqrt\theta
    \right)
    +
    o\bigl(r(\theta)+\sqrt\theta\bigr).
    \label{A-total-expansion}
\end{equation}
\begin{equation}
    Y_0^\theta
    =
    \phi(0)
    \left(
        \frac{\psi_A}{2H+3}r(\theta)
        +
        \frac{\psi_B}{4}\sqrt\theta
    \right)
    +
    o\bigl(r(\theta)+\sqrt\theta\bigr).
    \label{Y-total-expansion}
\end{equation}
Consequently, if
\[
    \psi_A\neq0
    \quad\text{when }H<\frac12,
\]
and, when $H=1/2$, if
\[
    \liminf_{\theta\downarrow0}
    \frac{|\psi_A r(\theta)+\psi_B\sqrt\theta|}
         {r(\theta)+\sqrt\theta}>0,
\]
then
\begin{equation*}
    \lim_{\theta\downarrow0}
    \operatorname{SSR}_0(\theta)
    =
    H+\frac32.
\end{equation*}
In particular, the limit is equal to $2$ when $H=1/2$.
\end{theorem}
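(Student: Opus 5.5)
The plan has three parts: expand $Y_0^\theta$ from the digital identity already proved, expand $A_0^\theta$ by a first-order Malliavin analysis, and take the ratio via the exact identity \eqref{SSR-A-over-Y}.

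\emph{Expansion of $Y_0^\theta$.} At $t=0$ we have $\Sigma_0^S=\theta\sigma(0,\theta)^2$. Hence $Y_0^\theta$ is exactly the left-hand side of \eqref{ATM-digital-expansion} in Theorem~\ref{thm:Gaussian-factor-ATM-skew-density}. Therefore \eqref{Y-total-expansion} is immediate from that theorem.

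\emph{Expansion of $A_0^\theta$.} Set $T=\theta$ and substitute \eqref{explicit-D-log-Sj}--\eqref{explicit-D-log-basket} into \eqref{A-definition} at $t=0$. Write $\Lambda_l^j(s,0)$ for the Malliavin kernels at $t=0$. The zeroth-order contribution is computed as follows. Take $\Pi_T^j\approx\Pi_0^j$ and $\mathcal D_0^l\log S_T^j\approx\rho_l^j\sqrt{V_0^j}$. Then $\sum_{l,i}\Pi_0^i\sqrt{V_0^i}\rho_l^i\,\mathcal D_0^l\log S_T=V_0$, so the bracket in \eqref{A-definition} vanishes at leading order. Consequently $A_0^\theta$ is determined by three first-order corrections:
\begin{itemize}
\item[(a)] the volatility-feedback terms $\frac12\int_0^\theta\sqrt{V_s^j}\Lambda_l^j(s,0)\,\mathrm dB_s^j$, which are of order $r(\theta)$ by \eqref{Lambda-short-time-limit};
\item[(b)] the weight fluctuations $\Pi_\theta^j-\Pi_0^j$ multiplying $\rho_l^j\sqrt{V_0^j}$, which are of order $\sqrt\theta$;
\item[(c)] the drift $-\frac12\int V\Lambda\,\mathrm ds$ and the factor $S_T/S_0-1$, both of which are $o(s(\theta))$ in $L^p$.
\end{itemize}
Each correction is multiplied by $1_{\{S_\theta<S_0\}}$ and its expectation is taken. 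I would use the joint Gaussian limit from Theorem~\ref{thm:Gaussian-factor-skew} and Step 3 of Theorem~\ref{thm1}. Normalize the first-order term by $s(\theta)$; it converges jointly with $\xi_\theta=(S_\theta/S_0-1)/\sqrt\theta$ to a Gaussian pair $(\xi,\zeta)$, with uniform integrability supplied by the $L^p$ bounds. Then $\mathsf E[1_{\{\xi<0\}}\zeta]=-\phi(0)\mathsf E[\xi\zeta]/\sqrt{v(0)}$. For the covariances:
\begin{itemize}
\item[(a)] Here $\int_0^\theta\Lambda^{ij}(s,0)\,\mathrm dB$ pairs with $B^k$ through $\int_0^\theta\frac{r(s)}{\sqrt s}\,\mathrm ds\sim\frac{2}{2H+1}r(\theta)\sqrt\theta$. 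This yields $\gamma$-coefficients matching $\psi^{ki}=\frac{2}{2H+1}\gamma^{ki}$. With the factor $\frac12$ and the weights $\Pi_0^i\sqrt{V_0^i}\rho_l^i/V_0$, the covariance should assemble into $-\frac12\psi_A r(\theta)$ up to the sign convention.
\item[(b)] The weight term has covariance given by the Step 5 computation for $\eta_B$, namely $\sqrt{v(0)}\psi_B$ up to normalization, and contributes $\frac{\phi(0)}2\psi_B\sqrt\theta$.
\end{itemize}
The key difference from $Y$ is that no time-averaging occurs. Thus the factor $\frac{1}{2H+3}$ coming from the integrated variance is replaced by the bare $\frac12$, giving \eqref{A-total-expansion}. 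As in Theorem~\ref{thm1}, the case $H=1/2$ with oscillating $r(\theta)/\sqrt\theta$ is handled by the subsequence principle.

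\emph{Ratio and main obstacle.} By \eqref{SSR-A-over-Y}, $\operatorname{SSR}_0(\theta)=A_0^\theta/Y_0^\theta$.
\begin{itemize}
\item If $H<1/2$ and $\psi_A\neq0$, the $r(\theta)$ terms dominate and the ratio tends to $\frac{1/2}{1/(2H+3)}=H+\frac32$.
\item If $H=1/2$, then $\frac{1}{2H+3}=\frac14$, so $A_0^\theta=2Y_0^\theta+o(s)$. The liminf condition keeps $Y_0^\theta/s(\theta)$ bounded away from zero, so the ratio tends to $2=H+\frac32$.
\end{itemize}
The main obstacle is justifying the first-order expansion of $A_0^\theta$ despite the discontinuous indicator. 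The plan is to avoid differentiating through $1_{\{S_\theta<S_0\}}$ by using weak convergence plus uniform integrability of the normalized corrections, together with a continuity-of-density argument. The smooth density and Schwartz bounds from Step~1 of Theorem~\ref{thm:Gaussian-factor-ATM-skew-density} make the limit law of $L_\theta$ have no atom at the boundary, so $\mathsf E[1_{\{L_\theta<a_\theta\}}\zeta_\theta]$ converges to its Gaussian counterpart. Sign bookkeeping in (a), namely matching the Malliavin kernel integral with $\psi^{ki}$, is routine but delicate.
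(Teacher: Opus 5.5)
Your proposal is correct and follows essentially the same route as the paper. You read off $Y_0^\theta$ from the ATM digital expansion and split $A_0^\theta$ into two parts. The volatility-feedback part is normalized by $r(\theta)$, uses the Karamata covariance $\int_0^\theta r(s)s^{-1/2}\,\mathrm ds\sim\frac{2}{2H+1}r(\theta)\sqrt\theta$, and applies Gaussian regression against $1_{\{\xi<0\}}$. The weight part is normalized by $\sqrt\theta$ and uses the $\eta_B$ covariance. You then conclude from the exact identity $\operatorname{SSR}_0(\theta)=A_0^\theta/Y_0^\theta$, all as the paper does.

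One simplification is worth noting. No density or Schwartz bounds are needed for the indicator, because $1_{\{S_\theta<S_0\}}=1_{\{\xi_\theta<0\}}$ holds exactly. In particular, the event is $\{L_\theta<0\}$, not $\{L_\theta<a_\theta\}$. Since the Gaussian limit $\xi$ has no atom at $0$, the continuous mapping theorem plus uniform integrability suffices. Finally, your hedge on the sign is unnecessary: carrying your signs through gives exactly $+\frac{\phi(0)}{2}\psi_A r(\theta)$ for the feedback term.
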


\begin{proof}
We first derive the expansion of $A_0^\theta$. By
\eqref{explicit-D-log-Sj} and
\eqref{explicit-D-log-basket},
\begin{align}
    \mathcal D_t^l\log S_T
    &=\sum_{j=1}^n\Pi_T^j\sqrt{V_t^j}\rho_l^j
    \nonumber\\
    &\quad+\frac12\sum_{j=1}^n\Pi_T^j\left(
          \int_t^T
          \sqrt{V_s^j}\Lambda_l^j(s,t)\,\mathrm dB_s^j
          -\int_t^T
          V_s^j\Lambda_l^j(s,t)\,\mathrm ds
        \right).
        \nonumber
\end{align}
Substitution into \eqref{A-definition}, with $t=0$ and
$T=\theta$, gives
\[
    A_0^\theta
    =
    A_{A,0}^\theta+A_{B,0}^\theta,
\]
where
\begin{align}
    A_{A,0}^\theta
    &:=
    -\mathsf E\Bigg[
      \frac{1_{\{S_\theta<S_0\}}S_\theta}{2S_0V_0}
      \sum_{i,j=1}^n
      \Pi_\theta^j\Pi_0^i\sqrt{V_0^i}
      \bigg(
        \int_0^\theta
        \sqrt{V_s^j}\Lambda^{ij}(s,0)\,\mathrm dB_s^j
    \nonumber\\[-1mm]
    &\hspace{5.3cm}
        -
        \int_0^\theta
        V_s^j\Lambda^{ij}(s,0)\,\mathrm ds
      \bigg)
    \Bigg],
    \label{AA-definition}\\
    A_{B,0}^\theta
    &:=
    \mathsf E\left[
      1_{\{S_\theta<S_0\}}\frac{S_\theta}{S_0}
      \left(
        1-\frac1{V_0}
        \sum_{i,j=1}^n
        \Pi_\theta^j\Pi_0^i
        \sqrt{V_0^iV_0^j}\rho^{ij}
      \right)
    \right].
    \nonumber
\end{align}

For the first term, define
\[
    \widehat X_\theta^j
    :=
    \frac1{r(\theta)}
    \int_0^\theta\frac{r(s)}{\sqrt s}\,\mathrm dB_s^j.
\]
Jointly with $\xi_\theta$, the vector
\[
    \bigl(
      \widehat X_\theta^1,\ldots,\widehat X_\theta^n,
      \xi_\theta
    \bigr)
\]
converges to a centered Gaussian vector
\[
    \bigl(
      \widehat X^1,\ldots,\widehat X^n,\xi
    \bigr)
\]
satisfying
\begin{equation}\label{hatX-xi-covariance}
    \mathsf E[\widehat X^j\xi]
    =
    \frac1{H+1/2}
    \sum_{k=1}^n
    \Pi_0^k\sqrt{V_0^k}\rho^{kj}.
\end{equation}
Indeed, Karamata's theorem gives
\[
    \mathsf E[(\widehat X_\theta^j)^2]
    =\frac1{r(\theta)^2}
      \int_0^\theta\frac{r(s)^2}{s}\,\mathrm ds
    \longrightarrow\frac1{2H},
\]
and the same theorem gives
\eqref{hatX-xi-covariance}.
Since $\mathsf E[\xi^2]=v(0)$, Gaussian regression gives
\begin{equation*}
    \mathsf E[
      \widehat X^j1_{\{\xi<0\}}
    ]
    =
    -\phi(0)
    \frac{
        \mathsf E[\widehat X^j\xi]
    }{
        \sqrt{v(0)}
    }.
\end{equation*}

By \eqref{Lambda-short-time-limit}, Karamata's theorem, and the
Gaussian moment bounds,
\begin{align*}
    \frac1{r(\theta)}
    \int_0^\theta
      \sqrt{V_s^j}\Lambda^{ij}(s,0)\,\mathrm dB_s^j
    &\Longrightarrow
      \sqrt{V_0^j}\gamma^{ij}\widehat X^j,
\end{align*}
jointly with $\xi_\theta$, whereas
\[
    \mathsf E\left|
      \int_0^\theta V_s^j\Lambda^{ij}(s,0)\,\mathrm ds
    \right|
    =O\bigl(r(\theta)\sqrt\theta\bigr)
    =o(r(\theta)).
\]
Furthermore, \eqref{return-xi-relation} gives
$1_{\{S_\theta<S_0\}}=1_{\{\xi_\theta<0\}}$.
Since $\mathsf Q(\xi=0)=0$, the continuous mapping theorem
implies that this indicator and the normalized stochastic
integrals above converge jointly in law to
$1_{\{\xi<0\}}$ and their respective limits.
After localization on compact sets of factor and price paths,
the normalized products in \eqref{AA-definition} are uniformly
integrable. The localization can be removed using the Gaussian
moment bounds and the true-martingale property of the prices.
Since $S_\theta/S_0\to1$ and
$\Pi_\theta^j\to\Pi_0^j$ in probability, Slutsky's theorem
and uniform integrability therefore allow us to pass to
expectations and obtain
\begin{equation*}
    \frac{A_{A,0}^\theta}{r(\theta)}
    \longrightarrow
    \frac{\phi(0)}
         {v(0)^{3/2}(2H+1)}
    \sum_{i,j,k=1}^n
    v^{jk}(0)\Pi_0^i\sqrt{V_0^i}\gamma^{ij}.
\end{equation*}
By the symmetry of $v^{jk}(0)$,
\begin{align*}
    &\sum_{i,j,k=1}^n
    v^{jk}(0)\Pi_0^i\sqrt{V_0^i}\gamma^{ij}
    =
    \frac12
    \sum_{i,j,k=1}^n
    v^{ij}(0)\Pi_0^k\sqrt{V_0^k}
    \bigl(\gamma^{ki}+\gamma^{kj}\bigr).
\end{align*}
It follows from \eqref{psi-A-Gaussian-factor} that
\begin{equation}\label{AA-limit-psiA}
    A_{A,0}^\theta
    =
    \frac{\phi(0)}2
    \psi_A r(\theta)
    +
    o(r(\theta)).
\end{equation}

We next consider $A_{B,0}^\theta$. Let
\[
    U_\theta^j
    :=
    \frac1{\sqrt\theta}
    \left(
        \frac{\Pi_\theta^j}{\Pi_0^j}-1
    \right).
\]
The first-order weight expansion gives, jointly with
$\xi_\theta$,
\begin{equation*}
    U_\theta^j
    \Longrightarrow
    U^j
    :=
    \sqrt{V_0^j}X^j-\xi.
\end{equation*}
Since
\[
    \sum_{i,j=1}^nv^{ij}(0)=v(0),
\]
we have
\begin{align}
    &1-\frac1{V_0}
    \sum_{i,j=1}^n
    \Pi_\theta^j\Pi_0^i
    \sqrt{V_0^iV_0^j}\rho^{ij}
    \nonumber =
    -\frac{\sqrt\theta}{v(0)}
    \sum_{i,j=1}^n
    v^{ij}(0)U_\theta^j
    +
    o_{L^1}(\sqrt\theta).
    \label{AB-bracket-expansion}
\end{align}
Together with
\[
    \frac{S_\theta/S_0-1}{\sqrt\theta}
    \Longrightarrow\xi,
\]
this implies
\begin{align*}
    \frac{A_{B,0}^\theta}{\sqrt\theta}
    &\longrightarrow
    -\frac1{v(0)}
    \sum_{i,j=1}^n
    v^{ij}(0)
    \mathsf E[
      U^j1_{\{\xi<0\}}
    ]                                      =
    \frac{\phi(0)}{v(0)^{3/2}}
    \sum_{i,j=1}^n
    v^{ij}(0)\mathsf E[U^j\xi].
\end{align*}
Furthermore,
\[
    \mathsf E[U^j\xi]
    =
    \sum_{k=1}^n
    \frac{v^{jk}(0)}{\Pi_0^j}
    -
    v(0).
\]
Using the symmetry of $v^{ij}(0)$ in the definition of
$\psi_B$, we obtain
\[
    \psi_B
    =
    \frac2{v(0)^{3/2}}
    \sum_{i,j=1}^n
    v^{ij}(0)\mathsf E[U^j\xi].
\]
Consequently,
\begin{equation}\label{AB-limit-psiB}
    A_{B,0}^\theta
    =
    \frac{\phi(0)}2
    \psi_B\sqrt\theta
    +
    o(\sqrt\theta).
\end{equation}
Combining \eqref{AA-limit-psiA} and
\eqref{AB-limit-psiB} proves
\eqref{A-total-expansion}.

The expansion \eqref{Y-total-expansion} is exactly
\eqref{ATM-digital-expansion}, in view of
\eqref{Y-definition} and
$\sqrt{\Sigma_0^S}=\sqrt\theta\,\sigma(0,\theta)$.
The rest is clear from the exact representation \eqref{SSR-A-over-Y}.
\end{proof}


\begin{thebibliography}{99}

\bibitem{AbiJaber}
E. Abi Jaber and S. Li (2026),
Capturing smile dynamics with the quintic volatility model: SPX, SSR and VIX,
\textit{Risk}, May 2026.

\bibitem{A1}
M. Avellaneda, D. Boyer-Olson, J. Busca and P. Friz (2002),
Reconstructing volatility,
\textit{Risk}, October 2002, 91.

\bibitem{A2}
M. Avellaneda, D. Boyer-Olson, J. Busca and P. Friz (2003),
Application of large deviation methods to the pricing of index options in finance,
\textit{Comptes Rendus Math\'ematique} 336(3), 263--266.

\bibitem{BL}
C. Bayer and P. Laurence (2014),
Asymptotics beats Monte Carlo: The case of correlated local volatility baskets,
\textit{Communications on Pure and Applied Mathematics} 67(10), 1618--1657.

\bibitem{RoughBook}
C. Bayer, P. Friz, M. Fukasawa, J. Gatheral, A. Jacquier and M. Rosenbaum (2024),
\textit{Rough Volatility}, SIAM.

\bibitem{Bergomi}
L. Bergomi (2016),
\textit{Stochastic Volatility Modeling}, CRC Press.

\bibitem{EFGR}
O. El Euch, M. Fukasawa, J. Gatheral and M. Rosenbaum (2019),
Short-term at-the-money asymptotics under stochastic volatility models,
\textit{SIAM Journal on Financial Mathematics} 10(2), 491--511.

\bibitem{FZ}
M. Forde and H. Zhang (2016),
Small-time asymptotics for basket options: The bivariate SABR model and the
hyperbolic heat kernel on $\mathbf H^3$,
\textit{SIAM Journal on Financial Mathematics} 7(1), 448--476.

\bibitem{FW}
P. K. Friz and T. Wagenhofer (2023),
Reconstructing volatility: Pricing of index options under rough volatility,
\textit{Mathematical Finance} 33, 19--40.

\bibitem{F21}
M. Fukasawa (2021),
Volatility has to be rough,
\textit{Quantitative Finance} 21, 1--8.

\bibitem{F26}
M. Fukasawa (2026),
On the skew stickiness ratio,
arXiv:2602.05241.

\bibitem{Hagan}
P. Hagan, A. Lesniewski, G. E. Skoufis and D. E. Woodward (2021),
SABR for baskets,
\textit{Wilmott}, March, 50--61.

\bibitem{Pirjol}
D. Pirjol (2023),
Smile consistent basket skew,
\textit{Risk}, September 2023.

\bibitem{Piterbarg}
V. Piterbarg (2006),
Markovian projection method for volatility calibration,
SSRN 906473.

\end{thebibliography}
\end{document}